\documentclass[11pt,letterpaper]{article}
\usepackage[english]{babel}
\usepackage[margin=1in]{geometry}
\usepackage{caption}
\usepackage{subcaption}

\usepackage{amsmath}
\usepackage{amsthm}
\usepackage{amssymb}
 \usepackage{amsfonts}
\usepackage{bbm}
\usepackage{booktabs}
\usepackage{xcolor}
\usepackage{graphicx}

\usepackage[utf8]{inputenc}
\usepackage{xifthen,enumitem,comment}
\setlist[enumerate,1]{label=\normalfont{(\Roman*)},leftmargin=2em}
\usepackage{xcolor}
\usepackage{etoolbox}
\makeatletter
\patchcmd{\env@cases}{1.2}{0.96}{}{}
\makeatother
\usepackage{algorithm,algpseudocode}
\usepackage{caption}
\usepackage{subcaption}
\usepackage[hypertexnames=false]{hyperref}
\hypersetup{
  colorlinks=true,
  linkcolor={blue!66!black},
  linkbordercolor=red,
  citecolor={blue!66!black},
}
 \usepackage{cleveref}
 \hypersetup{
 	colorlinks = true,
 	citecolor={blue!66!black},
 }
\usepackage{booktabs,tikz}
\usetikzlibrary{matrix,decorations.pathreplacing}

\definecolor{longhorn}{rgb}{0.8, 0.33, 0.0}
\newcommand{\sw}[1]{\color{longhorn}{#1}\color{black}}

\providecommand{\F}{\mathbbm{F}}
\providecommand{\G}{\mathbbm{G}}

\providecommand{\E}{\mathbbm{E}}
\providecommand{\R}{\mathbbm{R}}
\ifx\argmax\undefined\DeclareMathOperator*{\argmax}{arg\,max}\fi
\ifx\argmin\undefined\DeclareMathOperator*{\argmin}{arg\,min}\fi

\renewcommand{\Pr}{\mathbb{P}}
\renewcommand{\P}{\mathbb{P}}

\newcommand{\of}[1]{\left(#1\right)}
\newcommand{\off}[1]{\left[#1\right]}

\newcommand{\Rev}{\mathrm{Rev}}

\newcommand{\opt}{\mathsf{OPT}}

\newcommand{\ind}{\mathbf{1}}
\newcommand{\dd}[1]{\;\textrm{d}#1}

\newcommand{\Ex}[2]{\underset{{#1}}{\E}\off{{#2}}}
\renewcommand{\Pr}[2][]{\underset{{#1}}{\P}\off{#2}}

\newcommand{\hide}[1]{}
 
\newtheorem{proposition}{Proposition}[section]

\usepackage[hypertexnames=false]{hyperref}
\hypersetup{
  colorlinks=true,
  linkcolor={blue!66!black},
  linkbordercolor=red,
  citecolor={blue!66!black},
}
 \usepackage{cleveref}
 \hypersetup{
 	colorlinks = true,
 	citecolor={blue!66!black},
 }

\usepackage{nicefrac}

\newtheorem{theorem}{Theorem}[section]
\newtheorem{lemma}{Lemma}[section]

\title{Optimal Pricing with Unreliable Signals}
\author{
Zhihao Gavin Tang \thanks{Key Laboratory of Interdisciplinary Research of Computation and Economics, Shanghai University of Finance and Economics \texttt{tang.zhihao@mail.shufe.edu.cn}}
\and
Yixin Tao \thanks{Key Laboratory of Interdisciplinary Research of Computation and Economics, Shanghai University of Finance and Economics \texttt{taoyixin@mail.shufe.edu.cn}}
\and
Shixin Wang \thanks{H. Milton Stewart School of Industrial and Systems Engineering, Georgia Institute of Technology \texttt{shixin.wang@isye.gatech.edu}} 
}
\date{}

\begin{document}
\maketitle
\begin{abstract}
 We study a single-buyer pricing problem with unreliable side information, motivated by the increasing use of AI-assisted decision-making and LLM-based predictions. The seller observes a private sample that may be either accurate (coinciding with the buyer’s valuation), or hallucinatory (an independent draw from the prior), without knowing which case has realized. The buyer does not observe the realized signal, yet knows whether it is accurate or hallucinatory. This creates a higher-order informational asymmetry: the seller is uncertain about the reliability of his own side information, while the buyer has private information about that reliability. 
    
    Adopting a consistency-robustness framework, we characterize the exact Pareto frontier of tradeoffs between consistency (performance under an accurate signal) and robustness (performance under a hallucinatory signal). We show that keeping the unreliable signal private generates substantial value, yielding tradeoffs that strictly dominate any public-signal benchmark. 
    We further show that perfect consistency does not preclude meaningful protection against hallucination: for every prior, there exists a mechanism achieving perfect consistency together with a nontrivial robustness guarantee of $\frac{1}{2}$. Moreover, if the prior has an infinite mean or a mean of at most its monopoly price, we provide a mechanism that is simultaneously 1-consistent and 1-robust. Our results illustrate a new mechanism design paradigm: rather than relying only on information directly possessed by the designer, mechanisms can be built to leverage the other side’s knowledge about the reliability of the designer’s information. 
    
\end{abstract}
\section{Introduction}

We consider the problem of revenue maximization for selling a single item to a single buyer. In the standard Bayesian setting, the buyer privately knows her valuation, while the seller knows only that it is drawn from a commonly known distribution. It is well known that the seller’s optimal mechanism in this case is a take-it-or-leave-it posted price~\cite{mor/Myerson81}.

In modern online marketplaces, the seller often possesses private side information about the buyer, such as her IP address, browsing behavior, or transaction history. 
Such information is naturally modeled as a signal jointly distributed with the buyer’s valuation, and access to such a signal can significantly enhance the seller’s revenue. The seller may use the signal either to refine his posterior belief about the buyer’s valuation and implement price discrimination, or to exploit the buyer’s uncertainty about the signal, which in turn may allow full surplus extraction~\cite{cremer1988full}.

Recent advances in AI, particularly in large language models, have further broadened the range of side information that can be exploited. In addition to conventional structured features, unstructured sources such as the interaction history between the buyer and an LLM may also reveal valuable information about the buyer's type, despite being difficult to incorporate into pricing decisions using traditional methods. Consequently, signals about the buyer's valuation are increasingly generated or refined by machine learning systems.
At the same time, this trend introduces a fundamental challenge: such AI-generated signals may be highly informative, but they may also be unreliable. An LLM may predict the buyer’s valuation accurately when sufficient contextual information is available, but it may also hallucinate and produce a seemingly strong signal even when it has little genuine information about the buyer. 

A similar concern arises more broadly whenever the seller relies on information provided by a third party or intermediary. In many markets, sellers obtain assessments of buyers from outside sources, such as data vendors, brokers, recommendation systems, or other intermediaries that aggregate and process information on their behalf. While such assessments can sometimes be informative about the buyer’s valuation, the seller may not know how they are generated or how reliable they are. In particular, the seller typically does not observe the intermediary’s information set or data-processing effort, and they may see a signal without knowing how informative it truly is about the buyer’s valuation.
If the intermediary exerts sufficient effort, the resulting signal may be genuinely informative; otherwise, it may be little more than a generic guess drawn from the overall population and thus largely uninformative. 
These examples raise a common tension: 
\begin{quote}
How should the seller make use of side information whose informativeness is uncertain?
\end{quote}

A recent work of Lobel, Moreira, and Mouchtaki~\cite{lobel2025} also studies this form of unreliability motivated by machine-learned predictions. They consider a model in which the signal may be either accurate or hallucinatory, and take a Bayesian approach by assuming that the probability of each regime is known. In contrast, we seek a prior-independent understanding of how to leverage unreliable signals without knowing their informativeness.

\subsection{Our Model}

To capture the tradeoff between leveraging informative signals and guarding against unreliable ones, we adopt the consistency-robustness framework~\cite{jacm/LykourisV21} to study revenue maximization with unreliable side information.

Let $\F$ be a commonly known distribution over buyer valuations. The buyer has a private valuation $v \sim \F$, while the seller observes a private signal $s$. 
To capture, in a stylized way, both the power of LLM-generated information and its potential unreliability, we consider the following benchmark model. The signal $s$ is in one of two possible states:
\begin{itemize}
    \item it is \emph{accurate}, namely \(s=v\), in which case it fully reveals the buyer’s valuation. We evaluate a mechanism in this case by its \emph{consistency}: it is \(C\)-consistent if it obtains revenue at least \(C\cdot v\). Here \(v\) is the natural benchmark, since if the seller knew that the signal was accurate, he could extract the full valuation; or
    \item it is \emph{hallucinatory}, namely \(s\) is an independent sample drawn from \(\F\), in which case it provides no information about the buyer’s valuation. We evaluate a mechanism in this case by its \emph{robustness}: it is \(R\)-robust if it obtains revenue at least \(R\cdot \opt(\F)\), where \(\opt(\F)\) denotes the optimal revenue without useful side information. Indeed, if the seller knew that the signal was hallucinatory, he should simply ignore it and post the optimal price for \(\F\).
\end{itemize}

There is an intrinsic tradeoff between consistency and robustness due to the unreliability of the signal. Here \(C,R \in [0,1]\) are approximation factors, with larger values being more desirable. The seller’s goal is therefore to design mechanisms whose achievable pairs \((C,R)\) lie on the Pareto frontier, without knowing whether the observed signal is accurate or hallucinatory.

\paragraph{Public signal.}
A simple baseline for leveraging the unreliable signal \(s\) is to randomize between posting the signal \(s\) itself and posting the monopoly price for \(\F\). This randomized pricing rule achieves every tradeoff \((C,R)\) satisfying \(C+R\le 1\). Our main question is whether the seller can use the unreliable signal in a more sophisticated way and obtain a strictly better tradeoff.

A key observation is that this baseline does not truly benefit from keeping the signal private. Indeed, by posting the price \(s\) with positive probability, the seller effectively reveals the realized signal to the buyer. As a result, the mechanism gives up the seller’s informational advantage and cannot exploit the buyer’s uncertainty about the signal. 

\paragraph{Private signal.}
This motivates our focus on the private-signal setting. At first glance, it is not obvious why privacy should help. If the signal is known to be accurate, then privacy is immaterial: the seller can simply post \(s=v\) and extract the full valuation. If the signal is known to be hallucinatory, then privacy again appears irrelevant, since the signal conveys no information about the buyer’s type. Thus, in either extreme regime considered in isolation, making the signal public seems harmless. The interesting phenomenon arises precisely because the seller does not know which regime has realized. In this case, keeping the signal private turns out to be essential for improving upon the trivial public-signal frontier \(C+R\le 1\).

To make the buyer’s strategic reasoning about the seller’s private information well defined, we impose one final assumption: the buyer knows whether the signal is accurate or hallucinatory. More precisely, when the buyer knows that the signal is accurate, she infers that \(s=v\); when she knows that the signal is hallucinatory, she interprets \(s\) as an independent sample drawn from \(\F\).

This assumption is practically justified by the variance in users' privacy practices and their awareness of their own digital footprints. In LLM-driven marketplaces, a buyer generally knows how much personal information she has revealed. For example, a buyer who opts into data collection, accepts tracking cookies, or provides rich contextual information during LLM interactions to elicit better responses can reasonably infer that the system has captured accurate information about her preferences. Conversely, a privacy-conscious buyer who limits data sharing may infer that any prediction is based only on generic population data.
In broader settings, our assumption is natural in environments where the buyer is better informed than the seller about the provenance of the seller’s side information. The seller may observe a private score, recommendation, or assessment, yet be unable to determine whether it is genuinely based on buyer-specific information or instead generated from a noisy match, incomplete records, or a generic prediction procedure. The buyer often knows whether she used an identified or anonymous account, generated the relevant data, completed the relevant verification process, or directly interacted with the intermediary. Hence, it is plausible that the buyer knows whether the seller’s signal is reliable.

More generally, we only assume that the buyer knows the reliability regime of the signal, not the signal itself. 
From a modeling standpoint, this assumption also serves as a clean benchmark: it isolates the effect of signal privacy by removing the buyer’s uncertainty about the reliability of the information source.

\paragraph{Mechanism design with informed agents.} 
Broadly speaking, our work fits into a broader agenda of mechanism design with informed agents: designing mechanisms that leverage participants’ superior knowledge by integrating elicitation and optimization.
The key informational structure is not merely that the designer lacks information, but that the designer knows that the participant is better informed.
Our setting features a higher-order informational asymmetry: the buyer has private information about information (namely about the reliability of the seller’s signal), while the seller knows only that the buyer knows it. This creates informational leverage for the seller and gives rise to a form of reflexive robustness, in which decisions are designed not only against uncertainty itself, but also based on what the designer knows others know.
This raises a natural question at the interface of mechanism design and information elicitation: rather than first eliciting information and then optimizing based on it, can a mechanism directly leverage the participant’s superior knowledge within the allocation and pricing rule itself? Our model can be viewed as one concrete instance of this paradigm: the seller is uncertain about the reliability of the signal, while the buyer is better informed about the underlying information environment.

Several existing works share this same spirit. Feng and Hartline~\cite{focs/FengH18} show that prior-independent mechanism design can exhibit a non-trivial revelation gap, suggesting that non-truthful mechanisms may outperform truthful ones by relying on information and computation effectively delegated to the agents. 
In prior-independent pricing from samples, the seller observes only a sample from an unknown distribution, known only to belong to a specified ambiguity set. The goal is to design a mechanism based on this sample so as to maximize the worst-case approximation ratio over all distributions in the set. Feng, Hartline, and Li~\cite{stoc/FengHL21} showed that non-truthful mechanisms can outperform truthful ones by leveraging the buyer’s superior knowledge of the underlying distribution. Tang, Tao, and Wang~\cite{TangTW2026} further made this dependence explicit through hidden-pricing mechanisms tailored to the buyer’s informational advantage. Our model is closely connected to this line of work and admits a similar ambiguity-set interpretation. The key difference is that, instead of assuming regularity conditions on the ambiguity set as in those works, we study a structured family consisting of an arbitrary distribution together with all deterministic distributions.

\subsection{Our Results}
Our main result gives a complete characterization of the Pareto frontier of the consistency-robustness tradeoff \((C,R)\).

\begin{theorem}[Optimal Consistency--Robustness Tradeoff]
\label{thm:frontier}
For each \(C\in[0,1]\), define
\[
R^{\star}(C) \;\triangleq\; \inf_{\beta>0}\Bigl[(1+\beta) - C\,(\beta+\beta^2)\,\ln\bigl(1+\tfrac{1}{\beta}\bigr)\Bigr].
\]
Then \(R^{\star}(C)\) is the optimal (largest) robustness guarantee achievable under \(C\)-consistency:
\begin{itemize}
    \item (Achievability) For every distribution \(\F\) and every \(R\le R^{\star}(C)\), there exists a mechanism that is simultaneously \(C\)-consistent and \(R\)-robust.
    \item (Optimality) For every \(R>R^{\star}(C)\), there exists a distribution \(\F\) such that no mechanism can be simultaneously \(C\)-consistent and \(R\)-robust.
\end{itemize}
\end{theorem}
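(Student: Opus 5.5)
The plan is to work with the direct-revelation form of a mechanism. After observing $s$, the seller commits to an allocation/payment rule $(x(t,s),p(t,s))$ indexed by the buyer's report $t$.
\begin{itemize}
\item In the accurate regime, a buyer with value $v$ knows $s=v$. She faces the menu $\{(x(t,v),p(t,v))\}_t$ and picks an optimal entry. $C$-consistency requires that the chosen entry pays at least $Cv$, pointwise in $v$.
\item In the hallucinatory regime, the buyer does not see $s$. She picks a single report $t$ to maximize $\E_{s\sim\F}[v\,x(t,s)-p(t,s)]$. Robustness is measured by the resulting expected payment against $\opt(\F)$, which we normalize to $1$.
\end{itemize}
The key structural point is that the two regimes impose different incentive constraints on the same rule. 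This lets the seller set $p(v,v)\ge Cv$ for the ``own'' signal while keeping the averaged menu in the hallucinatory regime close to a monopoly-type menu. I will first show that one may restrict to rules that depend on $(t,s)$ only through a one-dimensional family of lotteries scaled by $s$: for a given $s$, the buyer who truthfully claims $t=s$ buys the item at price $Cs$, and other reports receive scaled lotteries.

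\textbf{Achievability.} For a fixed $\F$ with monopoly price $r$ (or a near-optimal price when no maximizer exists), I would use a hybrid of the baseline and the signal. With some probability the seller posts a price tied to $r$. Otherwise he offers the option ``buy at $Cs$'' together with a continuum of lotteries whose allocation probabilities, as a function of the ratio $s/r$, follow a logarithmic price curve. The parameter $\beta$ in $R^\star(C)$ should be the ratio at which this curve is cut off, and the term $(\beta+\beta^2)\ln(1+1/\beta)$ should come from integrating the payment density against $\frac{1}{1+\beta}$-type tails. The verification has two parts.
\begin{itemize}
\item For consistency, I check that on the diagonal $t=s=v$ the buyer weakly prefers the $Cs$ option. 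This is a one-dimensional IC check.
\item For robustness, I use that any single report $t$ in the hallucinatory regime produces an expected allocation $\bar x(t)=\E_s x(t,s)$ and payment $\bar p(t)$. Revenue is then that of a single-dimensional menu $\{(\bar x(t),\bar p(t))\}$. I lower-bound it by comparing with the posted price $r$, using only the inequality $r(1-F(r^-))=\opt(\F)$ and $q\,F$-quantile bounds.
\end{itemize}
Optimizing the free parameter then gives exactly $\inf_\beta[\cdot]$ by a minimax (Sion) argument over the seller's randomization and nature's choice of the worst $\F$ shape.

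\textbf{Optimality.} For each $\beta$, I would construct a hard distribution $\F_\beta$, likely a truncated equal-revenue distribution with an atom, parametrized so that the extremal $\beta$ corresponds to the truncation point. I then write the best robustness under $C$-consistency as a linear program in $(x(t,s),p(t,s))$. Its constraints are:
\begin{itemize}
\item diagonal IC plus payment $\ge Cv$ for every $v$;
\item hallucinatory IC for every type, encoded as the revenue of the averaged menu.
\end{itemize}
I then exhibit a feasible dual solution: multipliers on the diagonal consistency constraints plus a virtual-value-style weighting for the averaged menu. Its value equals $(1+\beta)-C(\beta+\beta^2)\ln(1+1/\beta)$. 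The logarithm arises from integrating the consistency multipliers $\propto 1/v$ over the support $[1,1+1/\beta]$ of the equal-revenue piece.

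\textbf{Main obstacle.} I expect the hardest step to be the dual certificate in the optimality part. The hallucinatory-regime IC constraints couple all signals $s$ through an expectation, so the naive Myerson bound is too weak, and the right multipliers must trade consistency payments at each $s$ against the averaged lottery. I would first guess the dual by demanding complementary slackness with the achievability mechanism specialized to $\F_\beta$, which should pin down the multipliers uniquely. I would then check dual feasibility directly, which reduces to a pointwise inequality in the ratio $v/s$.
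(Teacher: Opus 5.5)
Your optimality half has the same skeleton as the paper's proof: a hard prior plus a weak-duality certificate. The achievability half, however, has a genuine gap.

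\textbf{The achievability gap.} You never specify the hybrid mechanism, nor the comparison with the monopoly price $r$. So nothing in the sketch produces $(1+\beta)-C(\beta+\beta^2)\ln(1+1/\beta)$. Because the bound is tight, your mechanism would have to be exactly optimal on the extremal priors, and the sketch gives no reason to expect this.

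The appeal to Sion does not close the gap, for three reasons.
\begin{itemize}
\item The theorem lets the mechanism depend on $\F$, so no min--max exchange is needed.
\item The exchange you describe is not licensed. The revenue of a fixed mechanism is not concave or convex in $\F$ in general, since value and signal are both drawn from $\F$ and the buyer best-responds. The normalization $\opt(\F)=1$ is also nonlinear.
\item The $\beta$ in $R^\star(C)$ is nature's parameter (the slope of the extremal revenue curve, equivalently a dual variable). It is not a seller-side cutoff to be maximized over.
\end{itemize}
Your proposed restriction to lotteries ``scaled by $s$'' is also unjustified, since a general $\F$ has no scale invariance.

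\textbf{The missing idea.} Average your per-signal constraint $(1-C)s\ge s\,x(t,s)-p(t,s)$ over $s\sim\F$. Then only the averaged menu $\bar x(v)=\E_s[x(s,v)]$, $\bar p(v)=\E_s[p(s,v)]$ matters. It must be an ordinary one-dimensional IC/IR menu whose utility satisfies the \emph{upper} bound $u(v)\le(1-C)\E[s]+\E[(v-s)^+]$.

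Conversely, any such menu is implementable by $x(s,v)=\mathbf{1}[s<t(v)]$ (randomizing at atoms) and $p(s,v)=s\,x(s,v)-(1-C)s+c(v)$. The paper checks this by comparing convex conjugates.

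Writing the reduced problem as an LP over a price distribution and dualizing (strong duality) expresses $\Rev(\F,C)$ as a minimum over $\eta\ge 0$ and a decreasing $\beta(\cdot)$. Achievability then follows by minimizing the dual objective jointly over $(\F,\eta,\beta)$:
\begin{itemize}
\item for fixed duals, the worst prior saturates the envelope $\min\{v,\,T,\,\eta+\int_v^U\beta(w)\,dw\}$;
\item $\eta$ can be set to $0$;
\item $\beta$ can be taken to be a step function;
\item the remaining two-parameter expression is minimized at $v_L=T$.
\end{itemize}

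\textbf{The optimality half in the reduced LP.} Here your plan becomes short, because only the averaging direction is needed for an upper bound. The certificate is simply $\eta=0$, $\beta(v)=\beta\,\mathbf{1}[v<T(1+1/\beta)]$.

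The hard prior must have revenue curve $T-\beta(v-T)$ on $[T,T(1+1/\beta))$. This is the equal-revenue law conditioned on $v\le T(1+1/\beta)$, with no atom. A top atom would flatten the revenue curve at $T$. That is exactly the flat piece the analysis shows the worst case avoids, so such an instance would not certify tightness.
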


This frontier strictly dominates the public-signal benchmark \(C+R\le 1\) for every \(C\in(0,1]\). In particular, when the designer aims for a symmetric guarantee with \(C=R\), our optimal tradeoff gives an approximation ratio of approximately \(0.822\), substantially improving over the \(0.5\) ratio attained by the naive scheme that randomizes between posting the signal and posting the monopoly price. We illustrate our frontier $R^{\star}(C)$ and the public-signal benchmark in \Cref{fig:ratio}. We prove Theorem~\ref{thm:frontier} in Section~\ref{sec::main-proof}.

\begin{figure}[htbp] 
    \centering 
    \includegraphics[width=0.5\textwidth]{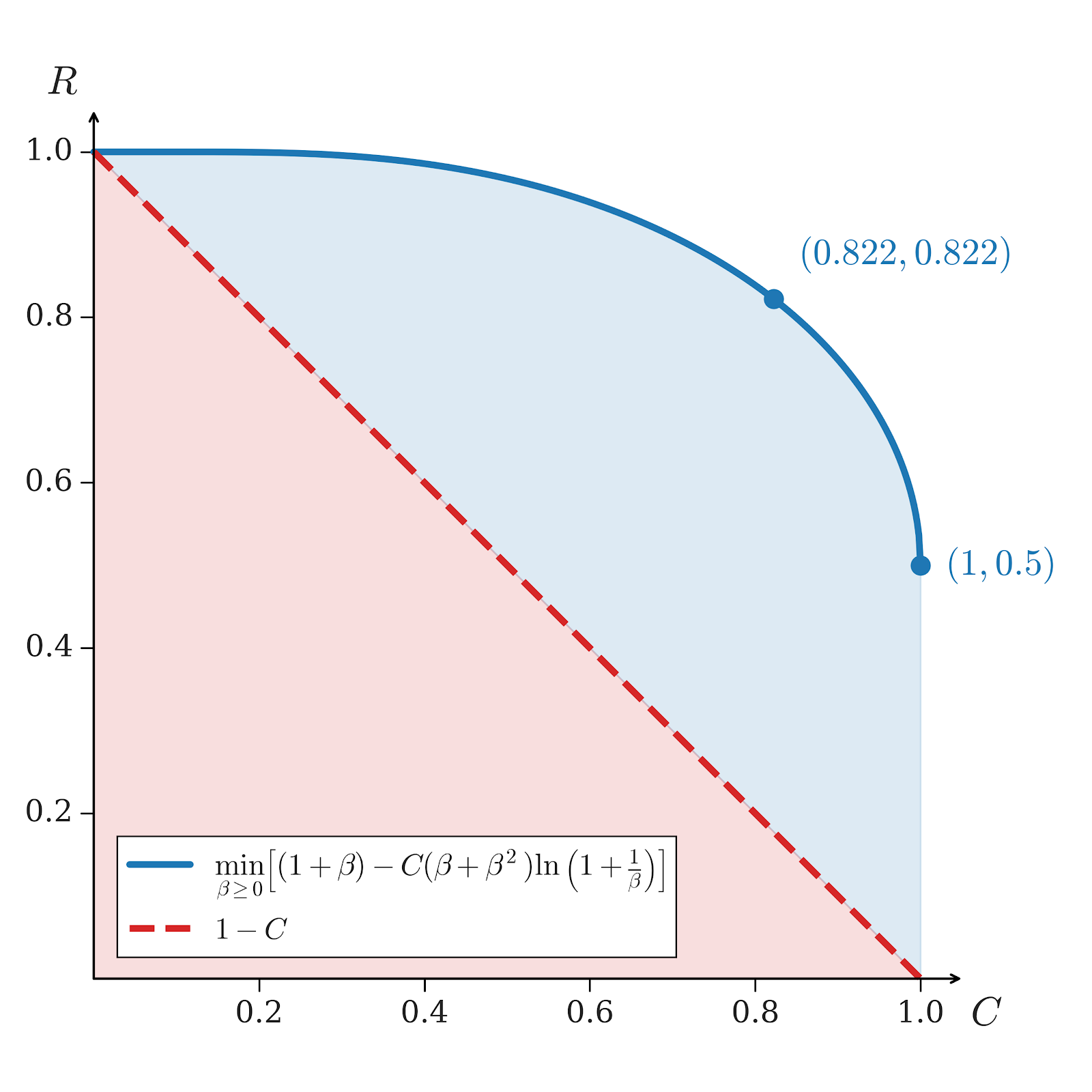} 
    \caption{The blue line represents the $R^{\star}(C)$ curve, which illustrates the robustness-consistency tradeoff frontier achieved in this paper. The dotted red line represents the equation $R + C = 1$, indicating the tradeoff available for the public-signal benchmark.} 
    \label{fig:ratio} 
\end{figure}

\paragraph{Technical Overview.}
Our proof proceeds in two main steps. First, for a fixed prior \(\F\) and consistency requirement \(C\), we characterize the seller's optimal achievable revenue \(\Rev(\F,C)\). Second, we evaluate the worst case performance over all priors \(\F\) to obtain the optimal consistency-robustness tradeoff.

\emph{Step 1: Characterizing Optimal Revenue for a Fixed Prior.} We first reduce the original optimization problem, which is effectively two-and-a-half dimensional in the signal, value, and regime, to a monopolist pricing problem with a nonstandard constraint that places an \emph{upper bound} on the buyer’s utility function. To the best of our knowledge, this is the first work to study mechanism design under such a constraint, as utility upper bounds do not arise naturally outside our setting. By contrast, mechanism design problems with utility \emph{lower bounds} are common, as they naturally capture outside options or opportunity costs.

As detailed in Section~\ref{sec::main-proof}, this reduction begins with a linear program (LP) over general incentive-compatible (IC) and individually rational (IR) mechanisms that condition on both the signal and the reported type. By fixing the mechanism in the accurate regime without harming its performance, we isolate the optimization problem to the hallucinatory regime, condensing it into a single-buyer IC/IR problem governed by the aforementioned utility upper bound. Using a posted-price representation, this becomes an infinite-dimensional LP over a randomized posted price—that is, a distribution $\G$ over prices (see Equation \eqref{eq:lp-ppa-primal}). The resulting optimal mechanism necessarily involves randomization 
and it provides the foundation to evaluate the worst-case approximation ratio.

\emph{Step 2: Establishing the Optimal Consistency-Robustness Tradeoff.} At a high level, our proof is based on a revenue-curve reduction paradigm, but from a different angle than the existing one in the mechanism design literature, e.g., sample-based pricing~\cite{geb/DhangwatnotaiRY15,or/AllouahBB22}, prior-independent auctions~\cite{EC/FuILS15,ms/AllouahB20,focs/HartlineJL20}, simple-versus-optimal mechanisms~\cite{geb/AlaeiHNPY19,sicomp/JinLTX20,stoc/JinLQTX19}. In the classical approach, one typically restricts attention to regular distributions and identifies extremal distributions within that class that minimize the performance guarantee. This allows the original infinite-dimensional problem to be reduced to a finite-dimensional family, whose members are often represented by triangular revenue curves in quantile space. That reduction is deeply tied to regularity. Without such structure, one might expect the worst-case instances to collapse to degenerate distributions, such as simple two-point distributions.

In contrast, we impose no restriction on the distribution family. Instead, we apply linear programming duality to rewrite the revenue $\Rev(\F,C)$ as a minimization over a scalar $\eta$ and a decreasing function $\beta(\cdot)$ (see \eqref{eq:lp-dual}). We then analyze the joint worst-case problem over $\F$ and the dual variables, $\eta$ and $\beta(\cdot)$, (see \eqref{eq:robust-def}). A key structural step reveals that, for any fixed $(\eta,\beta)$ and benchmark $T=\opt(\F)$, the revenue curve of the minimizing prior has a three-threshold “envelope” form (Lemma~\ref{lem:worst_case_dist}).

This structure enables us to further simplify the dual search space: $\eta$ can be eliminated (normalized to zero) without increasing the objective (Lemma~\ref{lem:eliminate_mu}), and it suffices to restrict $\beta$ to step functions (Lemma~\ref{lem:step_beta}). Ultimately, by leveraging the sample through interaction with the buyer, we demonstrate that the extremal distribution takes a triangular form uniquely in value space rather than quantile space. This uncovers a novel extremal geometry for worst-case pricing with unreliable samples.

\paragraph{Perfect Consistency.} An especially striking case is \(C=1\). 
At first sight, perfect consistency appears to leave no room for design: if the signal is accurate, the seller must extract the full value $v=s$, which seems to require pricing directly at the sample. That naive approach, however, yields zero robustness in the hallucinatory case. Our result shows that this intuition is incomplete.
Even under the requirement of perfect consistency, one can still achieve a nontrivial robustness guarantee of \(R=0.5\) for every prior distribution \(\F\), by keeping the realized signal hidden from the buyer and leveraging the buyer’s knowledge of whether the signal is accurate or hallucinatory. 

\begin{theorem}
\label{thm:perfect}
For every distribution \(\F\), there exists a \(1\)-consistent and \(0.5\)-robust mechanism.
\end{theorem}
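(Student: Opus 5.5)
The plan is to give an explicit \emph{hidden-price} mechanism rather than go through the full LP machinery behind Theorem~\ref{thm:frontier}. The mechanism offers the buyer a small menu whose outcomes depend on the private signal $s$, which the buyer never sees. The guiding idea is that a menu option may depend on $s$ in a way that is worthless to a buyer who knows $s=v$, yet valuable to a buyer who knows $s$ is an independent draw. This is exactly the ``utility upper bound'' viewpoint from the technical overview. Let $p^\star$ be a monopoly price for $\F$, $q^\star=\Pr{v\ge p^\star}$, and $\opt(\F)=p^\star q^\star$. The menu has two options. Option (a), the \emph{hidden posted price}, gives the buyer the item at price $s$ whenever $s\le v$ (she commits before seeing $s$). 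Option (b), the \emph{conditional monopoly price}, gives the buyer the item at price $p^\star$ but only on the event $s\le p^\star$, and otherwise gives nothing.

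\emph{Step 1 (perfect consistency).} Suppose the signal is accurate, so $s=v$ and the buyer knows it. Option (a) yields utility $0$ and revenue $v$. Option (b) yields $v-p^\star\le 0$ when $v\le p^\star$, and yields nothing (utility $0$) when $v>p^\star$. So no deviation is strictly profitable. Breaking ties in the seller's favour, or perturbing option (b) by an arbitrarily small amount to make the inequality strict, the accurate buyer picks (a) and pays exactly $v$. This gives $1$-consistency. The design principle is that every option other than (a) must give the accurate type non-positive utility at $s=v$, i.e., an upper bound on that type's utility. This is precisely the constraint the paper says isolates the hallucinatory regime.

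\emph{Step 2 (robustness).} Now suppose $s\sim\F$ independently and the buyer knows this. A type $v$ then compares $U_a(v)=\E_s[(v-s)^+]$ with $U_b(v)=F(p^\star)(v-p^\star)^+$ and chooses the larger. I would lower-bound the seller's revenue type by type, aiming to show
\[
\E_v[\text{payment}(v)]\;\ge\;\tfrac12\,p^\star q^\star .
\]
If $v$ chooses (a), her payment is $\E_s[s\,\ind\{s\le v\}]$, which for $v\ge p^\star$ is at least $p^\star\Pr{p^\star\le s\le v}$. If $v$ chooses (b), her payment is $p^\star F(p^\star)$. For every $v\ge p^\star$ these are complementary: either a $(1-F(p^\star))$-fraction of the price $p^\star$ is collected through the hidden price, or an $F(p^\star)$-fraction through the conditional price. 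One of the two fractions is at least $\tfrac12$. Averaging over $v\ge p^\star$ should therefore give the $\tfrac12$ bound. Where a clean pointwise argument fails, I would instead use the symmetric identity $\E_{v,s}[s\,\ind\{s\le v\}]=\tfrac12\E[\min(v,s)]$ for i.i.d.\ draws (up to ties) to handle the (a)-choosers collectively.

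\emph{Main obstacle.} The delicate point is Step 2. The buyer's choice between (a) and (b) is driven by \emph{utilities}, not by the seller's revenue, so a type just above $p^\star$ may choose the option that is worse for the seller. The option-(a) payment can also be small when $\F$ puts much mass far below $p^\star$. If the two-option menu does not give exactly $\tfrac12$, my first fix is to randomize option (b). Concretely, I would sell with probability $\G(p)$ at price $p$ on the event $s\le p$, for a suitable distribution $\G$ over $p\ge p^\star$ such as an equal-revenue-type density. This keeps the accurate type's utility at most $0$, so Step 1 is unaffected, while tuning the hallucinatory buyer's indifference so that the revenue she generates is at least $\tfrac12 p^\star$ whenever $v\ge p^\star$. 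Should this still fall short, I would fall back on the paper's general route: take the LP characterization of $\Rev(\F,1)$ with the utility upper-bound constraint, and plug a feasible dual point into the worst-case analysis at $C=1$, checking that the resulting bound is at least $\tfrac12$.
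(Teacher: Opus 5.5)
Your Step~1 is fine, but Step~2 fails, and the failure comes from option (a) itself, so no change to option (b) can repair it. For a hallucinatory type $v$, option (a) yields $\E_s[(v-s)^+]$. For any price $p$ and $v\ge p$ we have $(v-p)\,\ind\{s\le p\}\le (v-s)\,\ind\{s\le p\}\le (v-s)^+$, so $\mathbb{P}[s\le p]\,(v-p)^+\le \E_s[(v-s)^+]$ for every $p$. Hence (a) weakly dominates (b) and every randomization over conditional prices $p\ge p^\star$, and the hallucinatory buyer picks (a).

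But ``commit to buy at price $s$ whenever $s\le v$'' is outcome-equivalent to publicly posting the independent sample $s$. It gives away exactly the privacy the theorem needs, and the hallucinatory revenue becomes $\E_{v,s}[s\,\ind\{s\le v\}]$. In the language of \eqref{eq:lp-c2} at $C=1$, option (a) grants every hallucinatory type the largest utility the consistency constraint permits. For continuous $\F$ this pins down $x(v)=\F(v)$, so adding menu options cannot change the outcome.

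Posting a sample has no constant guarantee for general priors. Take $v=1$ w.p.\ $1-\delta$ and $v=\delta^{-2}$ w.p.\ $\delta$ (smooth it for continuity). Then $\opt(\F)=\delta^{-1}$, while your mechanism earns at most $(1-\delta)^2+\delta(1-\delta+\delta^{-1})=2-\delta$. This is also where your heuristics break:
\begin{itemize}
\item An (a)-chooser just above $p^\star$ pays roughly $p^\star\,\mathbb{P}[p^\star\le s\le v]$, which can be tiny, not a $(1-\F(p^\star))$-fraction of $p^\star$.
\item The identity $\E[s\,\ind\{s\le v\}]=\tfrac12\E[\min(v,s)]$ does not help, since $\E[\min(v,s)]\approx 2$ here.
\item The LP fallback is not an achievability argument. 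A single feasible dual point only upper-bounds $\Rev(\F,1)$. Moreover, the paper's worst-case reduction divides by $(1-C)\beta(0)$ and is carried out only for $C\in(0,1)$, deferring $C=1$ to this very theorem.
\end{itemize}

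The missing idea is to make the low price $s$ reachable only by a buyer who actually knows $s$. Everyone else should face a hidden price that is good for $\F$ but never below $s$. The paper sets a hidden default price $p^+(s)\in\argmax_{p\ge s}p\,(1-\F(p))$ and lets the buyer guess $s$; a correct guess lowers the price to $s$. The accurate buyer guesses correctly and pays $v$, since a wrong guess faces a price $\ge v$. For continuous $\F$, the hallucinatory buyer guesses correctly with probability zero and simply faces the random posted price $p^+(s)$.

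Let $r(q)=q\,\F^{-1}(1-q)$ be the revenue curve with maximizer $q^*$. If $s$ has quantile $q$, the price $p^+(s)$ earns $r^+(q)=\max_{t\le q}r(t)$. Because $r(q)/q$ is nonincreasing, $r^+(q)\ge \tfrac{q}{q^*}\opt(\F)$ for $q\le q^*$ and $r^+(q)\ge \opt(\F)$ for $q\ge q^*$. Therefore $\int_0^1 r^+(q)\dd q\ge (1-\tfrac{q^*}{2})\opt(\F)\ge \tfrac12\opt(\F)$.

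The contrast with your menu is the point: the paper raises the hallucinatory buyer's price above $s$ and hides the price $s$ behind a guess she cannot make. Your option (a) instead hands the price $s$ to anyone willing to commit.
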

Although the optimal mechanism under a general consistency constraint can be quite sophisticated, we show that, under the requirement of perfect consistency, a simple guess-for-discount mechanism attains the tight worst-case robustness guarantee of $0.5$.
The mechanism works as follows. The seller first commits to a default price based on the signal, and then offers the buyer an opportunity to guess the realized signal. If the buyer guesses correctly, she may purchase the item at a discounted price; otherwise, she must decide whether to purchase at the default price.

Although our mechanism is primarily conceptual, it highlights a practically relevant design principle: when AI-generated signals are informative but unreliable, it may be preferable not to use them in a fully direct manner, but instead to incorporate them into a pricing rule that is not fully disclosed to the buyer. In this sense, the mechanism can be viewed as a stylized model of AI-assisted pricing with limited disclosure, where the seller uses the signal internally while revealing only coarse or outcome-contingent information. The additional guessing step is mainly a theoretical device for extracting value from the buyer’s knowledge of the signal regime, but it suggests more broadly that carefully designed interaction rules can help sellers benefit from powerful yet imperfect AI predictions.

Beyond these worst-case guarantees over arbitrary distributions, we also identify a broad family of distributions that admits a best-of-both-worlds guarantee, namely, a mechanism that is both perfectly consistent \((C=1)\) and perfectly robust \((R=1)\).

\begin{theorem}
\label{thm:best-of-both}
Let \(\mu(\F)=\E_{s\sim\F}[s]\) denote the mean of \(\F\), and let \(p^*(\F)\) denote its monopoly price. Then the following hold:
\begin{itemize}
    \item If \(\mu(\F)\le p^*(\F)\), then there exists a mechanism that achieves \(C=R=1\).
    \item If \(\mu(\F)=\infty\), $p^*(\F)<\infty$, then for every \(\epsilon>0\), there exists a mechanism such that \(\min(C,R)\ge 1-\epsilon\).
\end{itemize}
\end{theorem}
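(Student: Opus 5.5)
The plan is to avoid the LP machinery of Section~\ref{sec::main-proof} and exhibit an explicit two-option menu that uses the buyer's knowledge of the regime to separate the two buyer types. Write $p^*=p^*(\F)$ and $\opt(\F)=p^*\,(1-F(p^*))$, where $F$ is the c.d.f.\ of $\F$. After the seller privately observes $s$, the buyer chooses one of two options without seeing $s$:
\begin{itemize}
\item[(a)] the \emph{default}: posted price $s$, so the buyer gets the item iff $v\ge s$ and then pays $s$;
\item[(b)] the \emph{insured contract}: receive the item and pay $p^*+(s-\mu)$, where $\mu=\mu(\F)$.
\end{itemize}
The payment in (b) may be negative for small $s$. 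This is allowed, because the buyer commits before $s$ is revealed and only her interim expected utility matters.

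For the first bullet of Theorem~\ref{thm:best-of-both}, I will check the two regimes separately.

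\emph{Accurate regime.} The buyer knows $s=v$. Option (a) gives her utility $0$ and the seller revenue $v$. Option (b) gives her utility $v-p^*-v+\mu=\mu-p^*\le 0$. So she weakly prefers (a), which yields $C=1$. When $\mu=p^*$ there is a tie; I will break it by tie-breaking in the seller's favor, or by adding an arbitrarily small rebate to option (a).

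\emph{Hallucinatory regime.} The signal $s$ is an independent draw from $\F$. Option (a) gives expected utility $\E_s[(v-s)^+]$, while option (b) gives $v-p^*$. To guarantee $\opt(\F)$ I will offer (b) together with the plain posted price $p^*$ in place of option (a) whenever the hallucinatory buyer would otherwise deviate. Concretely, I will refine (a) into ``price $\max(s,p^*)$ when $v\ge p^*$'', and then show that buyers with $v\ge p^*$ pay at least $p^*$ in expectation under either choice, while the accurate buyer still obtains zero utility. Hence revenue in this regime is at least $p^*(1-F(p^*))$, giving $R=1$. The key identity behind (b) is that its expected payment is exactly $p^*$ whatever the buyer's value, yet ex post it tracks the signal. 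Ex post tracking is precisely what makes (b) unattractive to the accurate buyer, and this step is where the hypothesis $\mu\le p^*$ is used.

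For the second bullet, $\mu=\infty$, the shift by $\mu$ is unavailable. I will replace $s$ by a truncated and rescaled signal: choose a cap $M$ and a scale $k\ge 1$ and charge $k\min(s,M)$ in option (b), with $M$ chosen via monotone convergence so that $k\,\E[\min(s,M)]=p^*/(1-\epsilon)$. Since $k\ge1$, accurate buyers with $v\le M$ get nonpositive utility from (b). Accurate buyers with $v>M$ are handled by adding a small-probability posted price at $s$, which costs only an $\epsilon$ fraction of robustness.

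The main obstacle is making consistency hold \emph{pointwise} in $v$ for the very large values $v>M$ in the infinite-mean case. This forces the mixing with posting $s$ (and possibly a second, larger cap) to be calibrated so that both $C\ge 1-\epsilon$ and $R\ge 1-\epsilon$ hold simultaneously. I would first try a geometric sequence of caps, mixed with weights decaying so that their total loss is at most $\epsilon$.
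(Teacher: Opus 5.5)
Your contract (b) is exactly the paper's default price $s+p^*-\mu$, and your accurate-regime check of it is correct. The gap is in the hallucinatory regime. Option (a) is offered to both regimes on identical terms, and for a hallucinatory buyer it dominates (b): since $(v-s)^+\ge v-s$, we have $\E_s[(v-s)^+]\ge v-\mu\ge v-p^*$ whenever $\mu\le p^*$. So, up to ties, every hallucinatory type picks (a), and your mechanism collapses to a hidden sample price with revenue $\E_{v,s}[s\,\ind[s\le v]]$. For $U[0,1]$ this is $1/6$, against $\opt(\F)=1/4$.

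The proposed repair does not close this. The seller cannot alter (a) ``whenever the hallucinatory buyer would otherwise deviate,'' because he never observes the regime. If the price $\max(s,p^*)$ is triggered by a reported value $v\ge p^*$, a hallucinatory buyer reports just below $p^*$ and keeps the cheap option. Under $U[0,1]$, types in $[1/2,3/4)$ do exactly this, and revenue falls to about $17/96<1/4$. If instead the price is imposed regardless of the report, accurate buyers with $v<p^*$ can no longer buy at price $v$, and $C=1$ fails.

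The missing idea is a term that only a buyer who \emph{knows} $s$ can use. In the paper, the buyer must commit to buying at the default price $s+p^*-\mu$ before learning $s$. This removes the option value of buying only when $s\le v$. A correct guess of $s$ then discounts the price to $s$. The accurate buyer guesses right and pays $v$; this is a genuine discount precisely because $\mu\le p^*$. For continuous $\F$, the hallucinatory buyer guesses right with probability zero, faces expected price $p^*$, and buys iff $v\ge p^*$.

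Your second bullet inherits the same defect. For $v\le M$ and $k\ge 1$,
\[
\E_s[(v-s)^+]=v-\E[\min(s,v)]\ge v-k\,\E[\min(s,M)],
\]
so (a) still beats (b) in the hallucinatory regime. The large-$v$ consistency problem you flag is also left unresolved. An accurate buyer with $v>kM$ strictly prefers (b) and pays only $kM$. Mixing in the price $s$ with small probability $\delta$ recovers only about $\delta v$, so no schedule of caps is shown to yield $(1-\epsilon)v$ pointwise.

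The paper again relies on the guess. A correct guess costs $(1-\epsilon)s$, which scales with $v$ automatically. A wrong guess triggers ``buy at $s$ if $s\le a$, receive a rebate $\epsilon s$ if $a<s\le b$, no trade otherwise.'' Here $a$ is chosen so that the low region has probability $q(a)>1-\epsilon$ and $m(a)=\E[s\,\ind[s\le a]]>q(a)p^*$. Then $\mu=\infty$ is used to pick $b$ with $\epsilon\,\E[s\,\ind[a<s\le b]]=m(a)-q(a)p^*$. This makes the hallucinatory participation utility exactly $q(a)(v-p^*)$ with expected payment $q(a)p^*$, giving $R=q(a)>1-\epsilon$. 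Meanwhile the wrong-guess branch never gives an accurate buyer more than $\epsilon v$. So guessing correctly is optimal for every $v$, and $C=1-\epsilon$ holds without any truncation.
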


Natural examples satisfying \(\mu(\F)\le p^*(\F)\) include the uniform distributions whose support starts at $0$ and exponential distributions. For the infinite-mean case, canonical examples include the equal-revenue distribution and other sufficiently heavy-tailed distributions with infinite mean.

Our results for the perfectly consistent regime, including Theorem~\ref{thm:perfect} and Theorem~\ref{thm:best-of-both} are presented in Section~\ref{sec:perfect}.

\subsection{Related Work}

\paragraph{Mechanism Design with Correlated Information.} The standard approach to modeling side information is through a joint distribution over the signal and the buyer’s valuation. A classic result of Cr{\'e}mer and McLean~\cite{cremer1988full} shows that, in multi-bidder auctions, sufficiently correlated and private signals can be leveraged to extract the full surplus when only interim individual rationality is required. Although such full surplus extraction is impossible in general under ex post individual rationality, Fu, Liaw, Lu, and Tang~\cite{soda/FuLLT18} show that the separation between private and public signals remains dramatic: the ratio between their optimal revenues can still be unbounded.

\paragraph{Prior-Independent Mechanism Design with Samples.} Our model can also be interpreted as a prior-independent mechanism design problem with a single sample. This line of work studies settings in which the seller has only very limited knowledge of the valuation distribution. In particular, the seller is given a single sample---either public~\cite{geb/DhangwatnotaiRY15,SIAM/HuangMR18,EC/FuILS15,or/AllouahBB22} or private~\cite{stoc/FengHL21,TangTW2026}---from the underlying distribution, and aims to design approximately optimal mechanisms, often for classes such as \(\alpha\)-regular distributions. Follow-up works have further explored the power of two samples~\cite{EC/BabaioffGMM18,EC/DaskalakisZ20}.

Our work is also related to Fu et al.~\cite{jet/FuHHK21}, who study sample-based surplus extraction under ambiguity about the ground-truth distribution. In contrast to our setting, their mechanism is DSIC, the samples are public, and the goal is full surplus extraction simultaneously for all candidate distributions in the ambiguity set.

\paragraph{Mechanism Design with Predictions.}
Our work contributes to the rapidly growing literature on algorithms with predictions, also known as learning-augmented algorithms. The consistency-robustness framework of Lykouris and Vassilvitskii~\cite{jacm/LykourisV21} has enabled a refined analysis of data structures, algorithms, and mechanisms; see \cite{ALPS} for an up-to-date list of papers. We focus here on the branch most closely related to our setting, namely mechanism design with predictions; see \cite{sigecom/BalkanskiGT23} for an annotated reading list.
Mechanism design with predictions was initiated by Agrawal et al.~\cite{mor/AgrawalBGOT24} and Xu and Lu~\cite{ijcai/XuL22}. Since then, it has been studied in a variety of settings, including facility location~\cite{mor/AgrawalBGOT24,corr/IstrateB22,corr/BalkanskiDGT25}, strategic scheduling~\cite{itcs/BalkanskiGT23}, auctions~\cite{nips/MedinaV17,ijcai/XuL22,nips/PrasadBS23,ijcai/CaragiannisK24}, competitive auctions~\cite{ec/LuWZ24}, online auctions~\cite{ec/BalkanskiGTZ24}, clock auctions~\cite{soda/GkatzelisST25}, and auto-bidding~\cite{www/AggarwalGTZ25}.

To the best of our knowledge, our work is the first to study value predictions in a Bayesian setting from a prior-independent perspective. We argue that independence between the prediction and the buyer’s valuation is the natural adversarial benchmark in this framework: any correlation between the prediction and the ground truth would already provide useful information to the mechanism designer, while an independent prediction represents the case in which the signal carries no information about the buyer’s type.
  
\section{Preliminary}
We study the mechanism design problem in which a seller aims to sell one product to a single buyer. The buyer has a private valuation $v \in [0, v_{\max}]$ for the product, which is realized in one of two ways: it is either an arbitrary value or drawn from a prior distribution $\F : [0, v_{\max}] \rightarrow [0, 1]$, where we use the convention $\F(v) \triangleq \P[V < v]$. The seller knows $\F$, but does not know the buyer's valuation $v$, nor whether $v$ is arbitrary or sampled from $\F$. The seller has access to a \emph{private} sample $s$. If the buyer's value is arbitrary, this sample perfectly correlates with the value ($s = v$); if the buyer's value is drawn from $\F$, the sample is drawn independently from the same distribution ($s \sim \F$). The buyer is fully aware of $\F$, her true valuation $v$, and its realization process, but she does not know the seller's observed sample $s$.

We study revenue-maximizing mechanisms that ask the buyer to report both her private valuation and its realization process. A mechanism consists of an allocation rule $x: \mathbb{R} \times \mathbb{R} \times \{\textsf{A}, \textsf{H}\} \to [0,1]$ and a payment rule $p: \mathbb{R} \times \mathbb{R} \times \{\textsf{A}, \textsf{H}\} \to \mathbb{R}$, which map the seller's sample $s$, the buyer's reported value $v$, and the buyer's reported realization process (either $\textsf{A}$ or $\textsf{H}$, where $\textsf{A}$ represents accurate state, namely $s = v$, and $\textsf{H}$ represents the hallucinatory state, namely $s$ is an independent sample drawn from $\F$) to an allocation probability and a payment, respectively. 
The seller seeks to design a mechanism that achieves the following two guarantees:
\begin{itemize}
    \item Consistency: If the buyer's private valuation is an arbitrary deterministic value, the seller's revenue is at least $C \cdot v$;
    \item Robustness:  If the buyer's private valuation is drawn from a prior distribution $\F$, the seller's revenue is at least $R \cdot \opt(\F)$.
\end{itemize} 

By the revelation principle, we restrict our attention to mechanisms that satisfy incentive compatibility (IC) and individual rationality (IR). Truthful reporting of both private valuation and its realization process must maximize the buyer's expected utility under the seller's sample $s$: 
\begin{align*}
(v,\textsf{H}) \to (v',\textsf{H}): \quad & \Ex{s\sim \F}{v \cdot x(s,v,\textsf{H}) - p(s,v,\textsf{H})} \ge \Ex{s\sim \F}{v \cdot x(s,v',\textsf{H}) - p(s,v',\textsf{H})} \\
(v,\textsf{A}) \to (v',\textsf{H}): \quad & v \cdot x(v,v,\textsf{A}) - p(v,v,\textsf{A}) \ge v \cdot x(v,v',\textsf{H}) - p(v,v',\textsf{H})
\end{align*}
When the buyer's valuation is drawn from $\F$, that is, in the hallucinatory state $\textsf{H}$, it is without loss to consider deviations that misreport only the valuation while continuing to report $\textsf{H}$. A deviation to $\textsf{A}$ would require the buyer to report a value consistent with the seller's realized signal $s$, which is not observed by the buyer. If $\F$ is non-degenerate, then for any fixed report $v'$ we have $\P_{s\sim \F}[s \neq v'] > 0$. Therefore, the seller can condition the mechanism on this mismatch and impose a sufficiently severe penalty, so such deviations are not profitable. The same reasoning rules out deviations of the form $(v,\textsf{A}) \to (v',\textsf{A})$ with $v' \neq v$, since in the accurate state the seller observes $s=v$ and can verify whether the buyer's report is consistent with the realized signal.

The optimal mechanism under the consistency $C$ is then captured by the following optimization: 
\begin{align*}
\Rev(\F,C) ~\triangleq~ \max_{x, p}: \quad \Ex{s\sim \F, v \sim \F}{p(s,v, \textsf{H})} \tag{$\star$} \label{lp:opt}
\end{align*}
subject to the following constraints: 
\begin{align*}
&~~\text{Consistency:}&& p(v,v,\textsf{A}) &&\ge C \cdot v \quad & \forall v \\
&~~ \text{IC:} &&\Ex{s\sim \F}{v \cdot x(s,v,\textsf{H}) - p(s,v,\textsf{H})} &&\ge \Ex{s\sim \F}{v \cdot x(s,v',\textsf{H}) - p(s,v',\textsf{H})}\quad & \forall v,v' \\
&&& v \cdot x(v,v,\textsf{A}) - p(v,v,\textsf{A}) &&\ge v \cdot x(v,v',\textsf{H}) - p(v,v',\textsf{H})\quad  & \forall v,v' \\
&~~\text{IR:} && \Ex{s\sim \F}{v \cdot x(s,v,\textsf{H}) - p(s,v,\textsf{H})} &&\ge 0 \quad& \forall v \\
&&& v \cdot x(v,v,\textsf{A}) - p(v,v,\textsf{A}) &&\ge 0 \quad & \forall v \\
&~~\text{Feasibility:}&& 0 \le x(s,v,\textsf{H}) \le 1 ~~ &&\text{and} \quad 0 \le x(v,v,\textsf{A}) \le 1 \quad & \forall s,v. \\
\end{align*}

Additionally, the robustness $R(C)$ is the minimal ratio between the revenue achieved, $\Rev(\F,C)$, and the optimal revenue, $\opt(\F)$, over all $\F$: $$R(C) \triangleq \min_{\F} \frac{\Rev(\F,C)}{\opt(\F)}.$$

\section{Mechanisms with Perfect Consistency}
\label{sec:perfect}

In this section, we focus on mechanisms with perfect consistency. We present several examples illustrating the space of feasible mechanisms. To highlight the main ideas, we describe the mechanisms in an intuitive way and reason from the buyer’s perspective, rather than specifying them in full formal detail and explicitly verifying the IC and IR constraints.

Somewhat surprisingly, even under the requirement of perfect consistency, one can still obtain nontrivial robustness guarantees for every prior distribution \(\F\). Moreover, for certain distributions, it is possible to achieve the best of both worlds, namely \(C=R=1\).

Throughout this section, for ease of presentation, we assume that the distribution \(\F\) is continuous. All of the mechanisms constructed below involve a step in which the buyer is asked to guess the signal. Under the continuity assumption, the buyer guesses correctly with probability zero in the hallucinatory case, which simplifies both the description and the analysis of the mechanisms. By contrast, in the accurate-signal case, the buyer is able to guess correctly, and by construction this is always the most favorable outcome for her.

This technical issue can also be resolved for general distributions with atoms by appropriately modifying the mechanism so that incorrect guesses incur a penalty while the buyer retains the option to abstain from guessing. This corresponds to the observation that the IC constraints associated with deviations of the form \((v,\textsf{H})\to (v',\textsf{A})\) are not relevant for any non-degenerate distribution \(\F\).
We omit the implementation details and note that the statement below remains valid for general distributions.

\subsection{Worst-Case Robustness with Perfect Consistency}
In this section, we prove Theorem~\ref{thm:perfect}.

\begin{proof}[Proof of Theorem~\ref{thm:perfect}]
Let \(p^+(s)\) denote the revenue-maximizing price subject to being at least \(s\), namely
\[
p^+(s)\in \argmax_{p\ge s} p\cdot (1-\F(p)).
\]
Consider the following mechanism. The seller sets a default price \(p^+(s)\) based on the signal \(s\), and keeps this price hidden from the buyer. The buyer is then given an opportunity to guess the realized signal \(s\). After she submits her guess, the seller reveals the price: if the buyer guessed correctly, the seller discounts the price to \(s\); otherwise, the price remains \(p^+(s)\). The buyer then decides whether to purchase by comparing her valuation with the resulting price.

We prove this mechanism is \(1\)-consistent and \(0.5\)-robust.

\paragraph{Accurate-signal case.} If the signal is accurate, then \(s=v\). By guessing the signal correctly, the buyer obtains the item at price \(s=v\), yielding utility \(0\). If she guesses incorrectly, then the price remains \(p^+(v)\ge v\), so she weakly prefers not to buy. Thus, under seller-favoring tie-breaking\footnote{When the buyer is indifferent between buying and not buying, we assume she buys (tie-breaking in favor of the seller).}, the buyer guesses correctly and buys the item, and the seller obtains revenue \(v\). Hence the mechanism is \(1\)-consistent.

\paragraph{Hallucinatory-signal case.} 
In this case, \(s\) is an independent sample from \(\F\). Since \(\F\) is continuous, the buyer guesses the exact value of \(s\) correctly with probability zero. Therefore, the discount is never triggered, and the mechanism is equivalent to posting the hidden price \(p^+(s)\), where \(s\sim \F\) is an independent sample.

Thus the expected revenue in the hallucinatory case is
\[
\E_{s\sim \F}\bigl[p^+(s)\cdot (1-\F(p^+(s)))\bigr].
\]
We show that this quantity is at least \(0.5 \cdot \opt(\F)\).

Let \(r(q) \triangleq q \cdot \F^{-1}(1-q)\) be the revenue curve of \(\F\).
Let \(q^*\) be a maximizer of \(r\), so that 
\[
\opt(\F) = \max_{q\in[0,1]} r(q) = r(q^*)~.
\]
For each \(q\in[0,1]\), define
\[
r^+(q) \triangleq \max_{t\le q} r(t).
\]

If the sample \(s\) has quantile \(q\), then the feasible prices \(p\ge s\) correspond exactly to quantiles \(t\le q\). Therefore, by definition of \(p^+(s)\), the revenue generated by the posted price \(p^+(s)\) is exactly \(r^+(q)\). Since \(\F\) is continuous, the quantile of \(s\) is uniformly distributed on \([0,1]\). Hence the expected revenue of the mechanism in the hallucinatory case is \( \int_0^1 r^+(q)\dd q \). 

Now observe that \(\F^{-1}(1-q)=r(q)/q\) is non-increasing in \(q\). Therefore, for every \(q\le q^*\),
\[
r^+(q) \ge r(q)=q\, \F^{-1}(1-q)\ge q\,\F^{-1}(1-q^*) = \frac{q}{q^*} \opt(\F).
\]
On the other hand, for every \(q\ge q^*\), the quantile \(q^*\) is feasible in the definition of \(r^+(q)\), so
\[
r^+(q)\ge r(q^*) = \opt(\F).
\]
Therefore,
\[
\int_0^1 r^+(q)\dd q \ge \int_0^{q^*} \frac{q}{q^*} \opt(\F)\dd q + \int_{q^*}^1 \opt(\F) \dd q =  \left(1- \frac{q^*}{2}\right) \opt(\F) \ge \frac{1}{2} \cdot \opt(\F).
\]
This proves that the mechanism is \(0.5\)-robust.
\end{proof}
Readers familiar with the pricing-with-samples literature may recall that posting a sample \(s\sim \F\) achieves a \(0.5\)-approximation to \(\opt(\F)\) for every regular distribution \(\F\). By contrast, for general distributions, posting the sample itself can have zero approximation ratio. Here, we show that replacing the sample price with the revenue-maximizing price subject to being at least \(s\) restores the \(0.5\)-approximation, and in fact guarantees it for every distribution \(\F\).

\subsection{Best of Both Worlds}
In this section, we prove Theorem~\ref{thm:best-of-both}.
\paragraph{Warm-up: Uniform \([0,1]\).}
We start with the simple case where \(\F\) is the uniform distribution on \([0,1]\).
Consider the following pricing mechanism: the seller sets the price equal to the unreliable signal, but keeps the realized price private until after the buyer decides whether to purchase. Equivalently, the buyer must decide whether to buy before observing the realized price. The buyer’s reasoning is then as follows:
\begin{itemize}
    \item If the signal is accurate, then she infers that the price equals her valuation, and therefore is willing to buy. This yields \(C=1\).
    \item If the signal is hallucinatory, then she views the price as an independent sample from the uniform distribution on \([0,1]\), and hence evaluates it by its expectation \(\E_{s\sim \F}[s]=1/2\). She therefore chooses to buy if and only if \(v \ge 1/2\). Remarkably, this exactly implements the monopoly price \(1/2\), and thus the mechanism also achieves \(R=1\).
\end{itemize}
The uniform example already illustrates the additional power afforded by signal privacy. In both the accurate and the hallucinatory regimes, the mechanism effectively sells the item at the mean of the signal distribution. The key coincidence in this example is that, for the uniform distribution on \([0,1]\), the mean and the monopoly price are the same, both equal to \(1/2\). As a result, pricing at the mean simultaneously delivers perfect consistency and perfect robustness.

More importantly, this example is not meant to suggest that the seller can only benefit from the coincidence between the mean and the monopoly price. Rather, for more general distributions, we will present mechanisms that make more refined use of the private signal and continue to exploit the additional power provided by signal privacy. In particular, we identify a broad family of distributions that admit mechanisms which are both perfectly consistent and perfectly robust; we refer to this phenomenon as the \emph{best-of-both-worlds} guarantee.

The best-of-both-worlds guarantee for the uniform distribution in fact extends to every distribution \(\F\) whose monopoly price \(p^*(\F)\) is at least its mean \(\mu(\F)\). To achieve this, we generalize the uniform mechanism by introducing an additional guessing step, which enables a more refined use of the private signal while also exploiting the buyer’s knowledge of whether the signal is accurate or hallucinatory.

Below, we prove a slightly more general statement, showing that the ratio between the monopoly price and the mean of the distribution provides a lower bound on the consistency achievable under perfect robustness.

\begin{proposition}
For every distribution \(\F\), there exists a mechanism that is
\[
C=\min\!\left(\frac{p^*(\F)}{\mu(\F)},\,1\right)\text{-consistent}
\qquad\text{and}\qquad
R=1\text{-robust}.
\]
In particular, if \(p^*(\F)\ge \mu(\F)\), then there exists a mechanism that achieves \(C=R=1\).
\end{proposition}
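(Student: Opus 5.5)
The plan is to generalize the uniform warm-up. The seller charges a \emph{hidden} price that is a fixed multiple of the signal, so that in the hallucinatory regime its mean equals the monopoly price $p^*(\F)$. A guessing step then protects revenue in the accurate regime. As in the rest of this section, I take $\F$ continuous and rely on the earlier remark for atoms. I also assume $0<\mu(\F)<\infty$; the infinite-mean case belongs to the second part of Theorem~\ref{thm:best-of-both} and is not needed here.

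Set $C=\min\bigl(p^*(\F)/\mu(\F),1\bigr)$ and $\lambda = p^*(\F)/\mu(\F)$, and define the hidden price $\pi(s)=\lambda s$, so that $\E_{s\sim\F}[\pi(s)]=p^*(\F)$. The mechanism runs as follows.
\begin{enumerate}
\item The buyer guesses the signal $s$.
\item If the guess is correct, she is offered the item at price $C\cdot s$.
\item If the guess is wrong, she must decide whether to buy \emph{before} the price is revealed. If she buys, she pays $\pi(s)$.
\end{enumerate}
In LP terms, $x(s,v,\textsf{H})=\ind[v\ge p^*(\F)]$ and $p(s,v,\textsf{H})=\pi(s)\,x(s,v,\textsf{H})$, while $x(v,v,\textsf{A})=1$ and $p(v,v,\textsf{A})=Cv$. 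The IR constraint in $(\star)$ is imposed only in expectation over $s$, so charging a realized price above $v$ is allowed.

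\emph{Hallucinatory case.} The guess is correct with probability zero. The buyer therefore faces a lottery price that is independent of her value and has expected payment $p^*(\F)$. She buys if and only if $v\ge p^*(\F)$, so the expected revenue is $p^*(\F)\,(1-\F(p^*(\F)))=\opt(\F)$, giving $R=1$. Incentive compatibility among $\textsf{H}$-reports follows because this is exactly the posted price $p^*(\F)$ in expectation. Reporting $\textsf{A}$ amounts to guessing, which never succeeds.

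\emph{Accurate case.} Here the buyer knows $s=v$. Guessing correctly gives utility $v-Cv\ge 0$. Deviating to a wrong guess (i.e., reporting $\textsf{H}$) leaves her facing the now-known price $\pi(v)=\lambda v$, with utility $\max(0,\,v-\lambda v)$. Since $\lambda\ge C$ in both cases ($\lambda\ge 1=C$ when $p^*\ge\mu$, and $\lambda=C$ otherwise), we get $v-\lambda v\le v-Cv$. So guessing is weakly optimal, and with seller-favorable tie-breaking the seller earns $Cv$. This is exactly the constraint $v\,x(v,v,\textsf{A})-p(v,v,\textsf{A})\ge v\,x(v,v',\textsf{H})-p(v,v',\textsf{H})$.

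The main point to check carefully is this last IC inequality. When $p^*<\mu$, the discount price must equal $\lambda v$, not $v$; otherwise an accurate buyer would prefer to reveal $\textsf{H}$ and pay $\lambda v<v$. This is precisely why consistency drops to $p^*/\mu$ in that case. The final claim follows by taking $p^*\ge\mu$, which gives $C=1$.
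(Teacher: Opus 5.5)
Your proposal is correct and is essentially the paper's argument: a hidden default price with mean $p^*(\F)$, combined with a guessing step that discounts to $C\cdot s$. It coincides exactly with the paper's mechanism when $p^*(\F)\le\mu(\F)$, and the case $\mu(\F)=\infty$ that you set aside is trivial, since then $C=0$ and posting $p^*(\F)$ suffices. The only difference is that for $p^*(\F)>\mu(\F)$ the paper uses the additive default price $s+p^*(\F)-\mu(\F)$ rather than your multiplicative $\lambda s$; both work for the same reason, namely mean $p^*(\F)$ and pointwise domination of the discounted price (your condition $\lambda\ge C$, the paper's $p^*(\F)\ge C\mu(\F)$), which is exactly the accurate-regime IC constraint you verify.
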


\begin{proof}
Let $C=\min\!\left(\frac{p^*(\F)}{\mu(\F)},\,1\right)$. Consider the following mechanism. The seller sets a default price equal to
\[
C \cdot s + p^*(\F) - C \cdot \mu(\F),
\]
and keeps the realized price private from the buyer. The buyer must decide whether to purchase before learning the realized signal or price. If she chooses to buy, she is then given an opportunity to guess the realized signal \(s\): if her guess is correct, the seller discounts the price to \(C \cdot s\); otherwise, the default price remains unchanged.

The buyer reasons as follows:
\begin{itemize}
    \item If the signal is accurate, then the buyer infers that \(s=v\). Hence, by choosing to buy and correctly reporting \(s\), she obtains the item at price \(C\cdot s= C \cdot v\), yielding non-negative utility. She therefore chooses to buy. Thus, the mechanism is \(C\)-consistent.

    This argument uses the fact that \(p^*(\F)\ge C \cdot \mu(\F)\) by the definition of $C$. Indeed, if \(p^*(\F)< C \cdot \mu(\F)\), then the so-called discount would actually increase the price. In that case, when the signal is accurate, the buyer would prefer to guess incorrectly rather than correctly, and the consistency guarantee would fail.
    \item If the signal is hallucinatory, then \(s\) is an independent sample from \(\F\). Since \(\F\) is continuous and the realized signal is private, the buyer guesses \(s\) correctly with probability zero. Therefore, the discount option has no effect, and she evaluates the hidden price by its expectation:
    \[
    \E\!\left[C \cdot s + p^*(\F) - C \cdot \mu(\F)\right] = p^*(\F).
    \]
    It follows that the mechanism effectively implements the monopoly price, and hence achieves \(R=1\).
\end{itemize}
\end{proof}

The mechanism above has two qualitatively different interpretations, depending on the relation between \(p^*(\F)\) and \(\mu(\F)\). When \(p^*(\F)\le \mu(\F)\), we have \(C=p^*(\F)/\mu(\F)\), and the mechanism reduces to simply posting the hidden price \(C\cdot s\): in the accurate case this yields revenue \(C\cdot v\), while in the hallucinatory case the buyer evaluates the price by its expectation \(C\mu(\F)=p^*(\F)\). Thus, in this regime the mechanism is exactly the scaled analogue of the uniform warm-up.

By contrast, when \(p^*(\F)>\mu(\F)\), we have \(C=1\), so the default price becomes \(s+p^*(\F)-\mu(\F)\), while a correct guess reduces it to \(s\). In this regime, the guessing step is essential for attaining perfect consistency, since it is precisely what allows the buyer to lower the price from the default level back to her true value.

The previous proposition suggests that the main difficulty arises when the mean of the distribution is large. Somewhat surprisingly, however, even in the extreme case where \(\mu(\F)=\infty\), one can still obtain an almost best-of-both-worlds guarantee, albeit via a different mechanism.

\begin{proposition}
If \(\mu(\F)=\infty\) and \(p^*(\F)<\infty\), then for every \(\epsilon>0\), there exists a \(C\)-consistent and \(R\)-robust mechanism with \(\min(C,R)\ge 1-\epsilon\).
\end{proposition}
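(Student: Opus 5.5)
The plan is to use the private signal inside a lottery whose \emph{expected} payment in the hallucinatory state equals roughly the monopoly price, while its realized payment tracks the signal closely enough that an accurate buyer never prefers it. The infinite mean is what lets us tune this expected payment freely, because truncated rebates of the form $\epsilon s$ have unbounded expectation.

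\textbf{Reduction.} First I would reduce to constraints on the hallucinatory menu. Fix $\epsilon>0$. Set the accurate-state option to $x(v,v,\textsf{A})=1$ and $p(v,v,\textsf{A})=(1-\epsilon)v$, which gives $C=1-\epsilon$ and accurate utility exactly $\epsilon v$. The constraint $(v,\textsf{A})\to(v',\textsf{H})$ then requires that every hallucinatory-menu entry $(x(s,v',\textsf{H}),p(s,v',\textsf{H}))$, evaluated at $s=v$, gives utility at most $\epsilon v$. Concretely, each entry must satisfy $s\,x(s)-\pi(s)\le \epsilon s$ pointwise in $s$.

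\textbf{Construction.} The hallucinatory menu has two entries: a null option $(0,0)$, which is allowed since $0\le\epsilon s$, and a ``buy'' option. For thresholds $T<U$, the buy option is defined by
\[
x(s)=\mathbf{1}\{s\le T\},\qquad
\pi(s)=\begin{cases}(1-\epsilon)s, & s\le T,\\ -\epsilon s, & T<s\le U,\\ 0, & s>U.\end{cases}
\]
The pointwise constraint holds on each piece: the utility is $\epsilon s$ on the first two pieces and $0$ on the third. A hallucinatory buyer with value $v$ evaluates this option in expectation over $s\sim\F$. Her utility is $vF(T)-\Pi(T,U)$, where
\[
\Pi(T,U)=(1-\epsilon)\E[s;s\le T]-\epsilon\,\E[s;T<s\le U].
\]
Since $\mu(\F)=\infty$, we have $\E[s;T<s\le U]\to\infty$ as $U\to\infty$. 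This term is also continuous and nondecreasing in $U$ (for continuous $\F$). Hence for each $T$ we can choose $U$ so that $\Pi(T,U)=p^*(\F)\,F(T)$ exactly; this is the key step and the only place the infinite-mean hypothesis enters. With this choice, the hallucinatory buyer buys iff $v\ge p^*(\F)$. IC among hallucinatory types is automatic, since the menu is a take-it-or-leave-it choice, and IR holds via the null option.

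\textbf{Revenue, and where care is needed.} The robustness revenue is $(1-F(p^*))\cdot p^*F(T)=F(T)\cdot\opt(\F)$. Taking $T$ large makes $F(T)\ge 1-\epsilon$; this step uses $p^*(\F)<\infty$, so that $\opt(\F)$ is attained at a finite price. Altogether this gives $\min(C,R)\ge 1-\epsilon$. I expect the main obstacle to be making the first step fully rigorous: the reduction must hold in the LP formulation ($\star$), the intermediate-value choice of $U$ needs handling when $\F$ has atoms, and the deviations $(v,\textsf{H})\to(v',\textsf{A})$ must be excluded. For atoms, I would instead pick $U$ and randomize the rebate at the atom to hit the target exactly. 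For the $\textsf{A}$-deviations, I would follow the paper's earlier remark: penalize mismatches between the reported value and $s$, which is possible for any non-degenerate $\F$, and an infinite-mean $\F$ is non-degenerate.
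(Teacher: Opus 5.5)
Your construction is essentially the paper's. The accurate type pays $(1-\epsilon)v$. The hallucinatory ``participate'' option allocates when the signal is low, pays a rebate $\epsilon s$ on a middle interval, and does nothing on the far tail. The interval's right endpoint is calibrated via the heavy tail so that the expected payment equals $p^*(\F)$ times the allocation probability. The paper implements this as a guess-the-signal mechanism and charges $s$ (rather than your $(1-\epsilon)s$) on the low region after a wrong guess. This makes no difference, since in both versions the accurate buyer's utility from the hallucinatory option never exceeds $\epsilon v$. Your direct verification of the $(v,\textsf{A})\to(v',\textsf{H})$ constraint is fine.

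One step is false as stated: you cannot choose $U$ for \emph{each} $T$. Since $\Pi(T,U)$ is nonincreasing in $U$ and equals $(1-\epsilon)\E[s;s\le T]$ at $U=T$, the equation $\Pi(T,U)=p^*(\F)F(T)$ has a solution $U\ge T$ only if $(1-\epsilon)\E[s;s\le T]\ge p^*(\F)F(T)$. This fails, for instance, for every $T<p^*(\F)$ with $F(T)>0$, because then $\E[s;s\le T]\le T\,F(T)<p^*(\F)F(T)$.

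The fix is to pick $T$ large enough that both $F(T)\ge 1-\epsilon$ and $(1-\epsilon)\E[s;s\le T]\ge p^*(\F)F(T)$. The second condition holds eventually because $\E[s;s\le T]\to\infty$. So the infinite-mean hypothesis is used here as well, contrary to your remark that the choice of $U$ is the only place it enters. This is exactly the role of the paper's requirement $m(a)>q(a)\,p^*(\F)$ when choosing $a$. With this condition added, your argument goes through.
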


\begin{proof}
Fix \(\epsilon>0\). Since \(\mu(\F)=\infty\), the truncated expectation
\(m(a):=\E_{v\sim\F}[v\cdot\ind[v\le a]]\) tends to \(+\infty\) as \(a\to\infty\), while
\(q(a) = \P_{v\sim\F}[v<a]\) tends to \(1\). Hence, we may choose \(a\) such that
\[
q(a)>1-\epsilon
\qquad\text{and}\qquad
m(a)>q(a)\cdot p^*(\F).
\]
Fix such an \(a\), and write \(q:=q(a)\).

Next, since \(\E_{v\sim\F}[v\cdot\ind[v>a]]=\infty\), we may choose \(b>a\) such that\footnote{If exact equality is impossible because of an atom at \(b\), one may adjust the transfer at \(b\) in the obvious way.} 
\[
\epsilon\cdot \E_{v\sim\F}[v\cdot\ind[a<v\le b]]
=
m(a)-q\,p^*(\F).
\]

We construct a mechanism with \(C=1-\epsilon\) and \(R=q>1-\epsilon\). The buyer first decides whether to participate. If she participates, she is asked to guess the realized signal \(s\).
\begin{itemize}
    \item If she guesses correctly, she receives the item and pays \((1-\epsilon)s\).
    \item If she guesses incorrectly, then \(s\) is realized, and:
    \begin{itemize}
        \item if \(s\le a\), she receives the item and pays \(s\);
        \item if \(a<s\le b\), she does not receive the item, and the seller pays her \(\epsilon s\);
        \item if \(s>b\), the item remains unsold and no transfer is made.
    \end{itemize}
\end{itemize}

We now analyze the buyer’s behavior.

\paragraph{Accurate-signal case.}
If the signal is accurate, then \(s=v\). By guessing correctly, the buyer obtains utility
\[
v-(1-\epsilon)v=\epsilon v.
\]
If she guesses incorrectly, her utility is 
\[
\begin{cases}
0, & v\le a,\\
\epsilon v, & a<v\le b,\\
0, & v>b.
\end{cases}
\]
Thus, guessing correctly is weakly better than guessing incorrectly. Under seller-favoring tie-breaking, she therefore participates and guesses correctly. Hence the mechanism is \(C\)-consistent with \(C=1-\epsilon\).

\paragraph{Hallucinatory-signal case.}

If the signal is hallucinatory, then \(s\) is an independent sample from \(\F\). Since \(\F\) is continuous, the buyer guesses the exact value of \(s\) correctly with probability zero. Therefore, conditional on participating, she almost surely ends up in the wrong-guess branch, and only needs to decide whether participating is worthwhile.

Her expected utility from participating is
\[
q\cdot v
-\E_{s\sim \F}[s\cdot\ind[s\le a]]
+\epsilon\cdot\E_{s\sim \F}[s\cdot\ind[a<s\le b]].
\]
By the choice of \(b\), this equals
\[
q\cdot v - m(a) + \bigl(m(a)-q\,p^*(\F)\bigr)
=
q\cdot (v-p^*(\F)).
\]
Therefore, the buyer participates if and only if \(v\ge p^*(\F)\).

Whenever she participates, the seller’s expected revenue is
\[
\E_{s\sim \F}[s\cdot\ind[s\le a]]
-\epsilon\cdot\E_{s\sim \F}[s\cdot\ind[a<s\le b]]
=
m(a)-\bigl(m(a)-q\,p^*(\F)\bigr)
=
q\,p^*(\F).
\]
Hence, in the hallucinatory case, the seller obtains expected revenue
\[
q\,p^*(\F)\cdot \P_{v\sim\F}[v\ge p^*(\F)]
=
q\cdot \opt(\F).
\]
Thus the mechanism is \(R\)-robust with \(R=q>1-\epsilon\), which completes the proof.
\end{proof}

The role of the heavy tail is to provide enough expected mass in rare large signals to calibrate the buyer’s utility in the hallucinatory case. More precisely, after fixing \(a\), the mechanism uses the interval \((a,b]\) together with the transfer \(\epsilon s\) to exactly offset the expected payment from the low-signal region \([0,a]\), making the buyer’s participation utility proportional to \(v-p^*(\F)\). Since \(\mu(\F)=\infty\), one can always choose \(b>a\) so that this calibration is possible. This is what allows the mechanism to capture a \(q(a)\)-fraction of the optimal revenue. Because \(q(a)\) can be made arbitrarily close to \(1\), the revenue can be made arbitrarily close to optimal.

\section{Optimal Consistency--Robustness Tradeoff (Theorem~\ref{thm:frontier})} \label{sec::main-proof}
In this section, we prove Theorem~\ref{thm:frontier}. In Section~\ref{subsec::main-proof-1}, we characterize the seller's optimal achievable revenue \(\Rev(\F,C)\), for a fixed prior \(\F\) and a given consistency requirement \(C\), and, in Section~\ref{subsec::main-proof-2}, we analyze the robustness guarantee over all \(\F\).

\subsection{Optimal Mechanism under the Consistency Constraint} \label{subsec::main-proof-1}
In this section, we simplify the linear program \eqref{lp:opt}. In particular, we show that the original formulation, whose decision variables are $x(\cdot,\cdot,\textsf{A}/\textsf{H})$ and $p(\cdot,\cdot,\textsf{A}/\textsf{H})$, can be reduced to an equivalent program involving only two one-dimensional functions, $x(\cdot)$ and $p(\cdot)$.

We first observe that in \eqref{lp:opt}, $x(\cdot, \cdot, \textsf{A})$ and $p(\cdot, \cdot, \textsf{A})$ appear exclusively within the following constraints:
\begin{align*}
    v \cdot x(v,v,\textsf{A}) - p(v,v,\textsf{A}) &\ge v \cdot x(v,v',\textsf{H}) - p(v,v',\textsf{H})\quad  & \forall v,v' \\
    v \cdot x(v,v,\textsf{A}) - p(v,v,\textsf{A}) &\ge 0 \quad & \forall v \\
    x(v,v,\textsf{A}) &\leq 1 \quad & \forall v \\
    p(v,v,\textsf{A}) &\geq C \cdot v \quad & \forall v
\end{align*}
Setting $x(v, v, \textsf{A}) = 1$ and $p(v, v, \textsf{A}) = C \cdot v$ for every $v$ will not decrease the optimal value of \eqref{lp:opt}. Applying these optimal choices reduces \eqref{lp:opt} to the following equivalent linear program, where $x(s,v,\textsf{H})$ is replaced by $x(s,v)$ and $p(s,v,\textsf{H})$ is replaced by $p(s,v)$:
\begin{subequations}
\addtocounter{equation}{-1}
\begin{align}
\max_{x,p}: \quad & \Ex{s,v\sim \F}{p(s,v)} \label{eq:lp-obj1}\\
\text{subject to}: \quad & \Ex{s\sim \F}{v \cdot x(s,v) - p(s,v)} \ge \Ex{s\sim \F}{v \cdot x(s,v') - p(s,v')} & \forall v,v' \label{eq:lp-a1}\\
& \Ex{s \sim \F}{v \cdot x(s,v) - p(s,v)} \ge 0 & \forall v \label{eq:lp-b1}\\
& (1-C) \cdot s \ge s \cdot x(s,v) - p(s,v) & \forall s,v \label{eq:lp-c1} \\ 
& 0 \le x(s,v) \le 1 & \forall s,v
\end{align}
\end{subequations}

We show that the above program is equivalent to the following linear program. Hence, for any given consistency level $C$, the optimal robust mechanism is obtained by solving an LP over the one-dimensional functional variables $x(\cdot)$ and $p(\cdot)$. When $\F$ has finite support, this LP is finite-dimensional and can be solved directly.
\begin{subequations}
\addtocounter{equation}{-1}
\begin{align}
\max_{x,p}: \quad & \Ex{v \sim \F}{p(v)} \label{eq:lp-obj2}\\
\text{subject to}: \quad & v \cdot x(v) - p(v) \ge v \cdot x(v') - p(v') & \forall v,v' \label{eq:lp-a2}\\
& v \cdot x(v) - p(v) \ge 0 & \forall v \label{eq:lp-b2}\\
& v \cdot x(v) - p(v) \le (1-C) \cdot \Ex{s \sim \F}{s} + \Ex{s \sim \F}{(v-s)^+} & \forall v \label{eq:lp-c2}\\ 
& 0 \le x(v) \le 1 & \forall v
\end{align}
\end{subequations}
\begin{lemma}
The linear program \eqref{eq:lp-obj2} is equivalent to the linear program \eqref{lp:opt}.
\end{lemma}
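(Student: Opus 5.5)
We prove the two inequalities between optimal values separately: first that every feasible solution of \eqref{eq:lp-obj1} projects to a feasible solution of \eqref{eq:lp-obj2} with the same objective value, and then that every feasible solution of \eqref{eq:lp-obj2} lifts to one of \eqref{eq:lp-obj1} with the same value. Equivalence of \eqref{eq:lp-obj2} with \eqref{lp:opt} then follows from the reduction of \eqref{lp:opt} to \eqref{eq:lp-obj1} already given above.

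\emph{Projection.} Given a feasible $(x(s,v),p(s,v))$, set $x(v)=\E_{s\sim\F}[x(s,v)]$ and $p(v)=\E_{s\sim\F}[p(s,v)]$. The objective, the IC constraints \eqref{eq:lp-a1} and the IR constraints \eqref{eq:lp-b1} depend on $(x(s,v),p(s,v))$ only through these averages, so they become \eqref{eq:lp-a2} and \eqref{eq:lp-b2}. For \eqref{eq:lp-c2} we write
\[
v\,x(v)-p(v)=\E_s\bigl[(s\,x(s,v)-p(s,v)) + (v-s)\,x(s,v)\bigr]
\le (1-C)\,\E_s[s] + \E_s[(v-s)^+],
\]
where the first term is bounded by \eqref{eq:lp-c1} and the second by $x(s,v)\in[0,1]$.

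\emph{Lifting.} Take $(x(\cdot),p(\cdot))$ feasible for \eqref{eq:lp-obj2} and write $u(v)=v\,x(v)-p(v)$. For each $v$ we will choose $x(s,v)\in[0,1]$ with $\E_s[x(s,v)]=x(v)$ and a slack $w(s,v)\le (1-C)s$, and then set $p(s,v)=s\,x(s,v)-w(s,v)$. Constraint \eqref{eq:lp-c1} then holds pointwise. The averages match $x(v)$ and $p(v)$, which gives \eqref{eq:lp-a1}, \eqref{eq:lp-b1} and equality of objectives, provided
\[
\E_s[w(s,v)] = u(v) - \E_s[(v-s)\,x(s,v)].
\]
Since the slack may be lowered freely (no lower bound on payments), this is possible exactly when $u(v)\le (1-C)\mu(\F) + \E_s[(v-s)\,x(s,v)]$. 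To make the right-hand side as large as possible, we choose $x(s,v)$ greedily: allocate on the lowest signals up to total mass $x(v)$. Concretely, $x(s,v)=\ind[s<v']$ with $v'$ chosen so that $\F(v')=x(v)$, and randomized at an atom of $s$ if needed.

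\emph{Main obstacle.} The key step is verifying this inequality. Constraint \eqref{eq:lp-c2} bounds $u(v)$ by the greedy value at mass $\F(v)$, but we need it at mass $x(v)$, which can differ. We close the gap with the IC structure of \eqref{eq:lp-a2}: $u$ is convex and $x(v)$ is a subgradient of $u$ at $v$. Applying \eqref{eq:lp-a2} at type $v'$ gives $u(v')\ge v'\,x(v)-p(v) = u(v)-x(v)(v-v')$. Combining this with \eqref{eq:lp-c2} at $v'$,
\[
u(v)\le u(v') + x(v)(v-v') \le (1-C)\mu(\F) + \E_s[(v'-s)^+] + \F(v')(v-v').
\]
The last two terms equal $\E_s[(v-s)\ind[s<v']]$, which is exactly the greedy value $\E_s[(v-s)\,x(s,v)]$. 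Boundary cases need separate care: when $x(v)$ falls inside an atom of $\F$ (handled by randomizing the allocation at the atom), and when $x(v)\in\{0,1\}$ (take $v'$ at the extremes of the support). These cases are routine, and this completes the equivalence.
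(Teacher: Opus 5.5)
Your proof is correct and follows the paper's argument. You use the same averaging projection and the same lowest-signal threshold allocation (randomized at an atom) with payments shifted by a constant. Your key step, $u(v)\le u(v')+x(v)(v-v')$ combined with \eqref{eq:lp-c2} at $v'$, is exactly the paper's conjugate inequality $K^*(x(v))\ge H^*(x(v))$ evaluated at the maximizer $y=v'$. Your version uses a single IC inequality instead of the envelope representation of $u$, and it handles atoms explicitly by placing $v'$ at the atom.
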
 
\begin{proof}
It suffices to show that the optimal value of \eqref{eq:lp-obj1} is no greater than that of \eqref{eq:lp-obj2} and vice versa.

First, consider constraint \eqref{eq:lp-c1}. Taking expectation over $s\sim \F$ yields
\[
(1-C) \cdot \Ex{s\sim \F}{s} \ge \Ex{s\sim \F}{s \cdot x(s,v)} - \Ex{s\sim \F}{p(s,v)} = v \cdot \Ex{s\sim \F}{x(s,v)} - \Ex{s\sim \F}{p(s,v)} - \Ex{s\sim \F}{(v-s) \cdot x(s,v)} 
\]
 The objective of \eqref{eq:lp-obj1} becomes larger if we replace constraint \eqref{eq:lp-c1} by the inequality above.
Define $x(v) = \E_{s\sim \F}[x(s,v)]$ and $p(v)= \E_{s\sim \F}[p(s,v)]$. Then the preceding inequality becomes
\begin{equation*}
 v \cdot x(v) - p(v) \le (1-C) \cdot \Ex{s\sim \F}{s} + \Ex{s\sim \F}{(v-s) \cdot x(s,v)} \le (1-C) \cdot \Ex{s\sim \F}{s} + \Ex{s\sim \F}{(v-s)^+}~,
\end{equation*}
where the inequality holds by $x(s,v)\in[0,1]$ for all $s,v$. Thus, constraint \eqref{eq:lp-c1} implies \eqref{eq:lp-c2}. By the construction of $x(v)$ and $p(v)$, constraints \eqref{eq:lp-a1}, \eqref{eq:lp-b1} imply the corresponding constraints \eqref{eq:lp-a2} \eqref{eq:lp-b2}. 
Hence, for any feasible solution in \eqref{eq:lp-obj1}, we can construct a feasible solution of \eqref{eq:lp-obj2} with the same objective value by setting $x(v) = \E_{s\sim \F}[x(s,v)]$ and $p(v)= \E_{s\sim \F}[p(s,v)]$, so the objective of  \eqref{eq:lp-obj1} is no greater than that of \eqref{eq:lp-obj2}.

We now show that any feasible pair $(x(\cdot),p(\cdot))$ for \eqref{eq:lp-obj2} can be implemented by a pair $(x(\cdot,\cdot),p(\cdot,\cdot))$ feasible for \eqref{eq:lp-obj1}. The construction is as follows. For each $v$, we allocate whenever the realized signal $s$ is below a cutoff $t(v)$; if $\F$ has an atom at $t(v)$, we randomize only at the threshold $s=t(v)$. The cutoff and tie-breaking probability are chosen precisely so that the ex ante allocation probability  matches $x(v)$, i.e., $\E_{s\sim\F}[x(s,v)]=x(v)$. We note, when $\F$ is continuous, $\P[s=t(v)]=0$ and no randomization is needed.

Fix $v$ and let $s\sim\F$. Recall that $\F(t)=\P[s<t]$.
Choose a cutoff $t=t(v)$ such that
\[
\F(t)\le x(v)\le \F(t)+\P_{s}[s=t].
\]
Define the tie-breaking probability
\[
\alpha(v)\triangleq
\begin{cases}
\dfrac{x(v)-\F(t)}{\P_s[s=t]} & \text{if } \P_s[s=t]>0,\\
1 & \text{if } \P_s[s=t]=0.
\end{cases}
\]
and set
\begin{align*}
x(s,v) & = \begin{cases}
1 & \text{if } s<t, \\
\alpha(v) & \text{if } s=t, \\
0 & \text{if } s>t.
\end{cases}, \text{ and } \\
p(s,v) & = s \cdot x(s,v) - (1-C) \cdot s + p(v) - \Ex{s'\sim \F}{s' \cdot x(s',v) - (1-C) \cdot s'}~.
\end{align*}
By construction, $\alpha(v)\in[0,1]$ and
\[
\Ex{s\sim\F}{x(s,v)} = \Pr[s \sim \F]{s<t} + \alpha(v) \cdot \Pr[s\sim \F]{s=t} = \F(t)+\alpha(v) \cdot \Pr[s \sim \F]{s=t} = x(v)~.
\]
Moreover, taking expectations in the definition of $p(s,v)$ yields $\E_{s\sim\F}[p(s,v)]=p(v)$. 
All the constraints in problem \eqref{eq:lp-obj1} except for \eqref{eq:lp-c1} are straightforwardly satisfied. For constraint \eqref{eq:lp-c1} to hold, it suffices to verify 
\[
\forall v, \quad \Ex{s\sim\F}{s \cdot x(s,v)} - (1-C) \cdot \Ex{s\sim\F}{s} \le p(v)~.
\]
Since $(x(\cdot),p(\cdot))$ satisfies the incentive-compatibility and individual-rationality constraints in \eqref{eq:lp-obj2}, 
constraint \eqref{eq:lp-c2} is equivalent to
\[
\forall v, \quad \int_0^v x(w)  \dd w  - p(0) \le (1-C) \cdot \Ex{s\sim\F}{s} + \int_0^v \F(w) \dd w~.
\]
Let $K(v) = \int_0^v x(w) \dd  w  - p(0)$ and $H(v) = (1-C) \cdot \E_{s\sim\F}[s] + \int_0^v \F(w) \dd w$. Then $K(v)\le H(v)$ for all $v$. Consider their convex conjugates. We have that the convex conjugate of $K(\cdot)$ satisfies
\[
K^*(x(v)) = \max_{y} \left\{ x(v) \cdot y - K(y) \right\} = v \cdot x(v) - \int_0^v x(w) \dd w  + p(0)= p(v)~.
\]
Similarly, the convex conjugate of $H(v)$ is
\begin{align*}
H^*(x(v)) & = \max_y \left\{ x(v) \cdot y - H(y) \right\} \\
& = x(v) \cdot F^{-1}(x(v))- \int_0^{F^{-1}(x(v))} \F(w)\dd w -(1-C) \cdot \Ex{s \sim \F}{s} \\ 
& = \int_0^{\F^{-1}(x(v))} s\dd \F(s) -(1-C) \cdot \Ex{s\sim \F}{s} \\
& = \Ex{s \sim \F}{s \cdot x(s,v)} - (1-C) \cdot \Ex{s\sim \F}{s}.
\end{align*}
Since $K(v) \le H(v)$ for every $v$, and $x(\cdot)$ is nondecreasing under incentive compatibility, we have
\[
\forall v, \quad K^*(x(v)) \ge H^*(x(v)) \iff p(v) \ge \Ex{s\sim\F}{s \cdot x(s,v)} - (1-C) \cdot \Ex{s\sim\F}{s}~.
\]
which is exactly the condition needed for \eqref{eq:lp-c1}.
Therefore, for any feasible solution to problem \eqref{eq:lp-obj2}, our constructed solution is feasible to problem \eqref{eq:lp-obj1} and attains the same objective value. 
Hence, the two problems are equivalent.
\end{proof}

The objective \eqref{eq:lp-obj2} can be interpreted as a single-buyer, single-item revenue maximization problem, subject to the additional constraint \eqref{eq:lp-c2}. This constraint requires that the utility of a buyer with valuation $v$ is bounded above by $(1-C) \cdot \E_{s\sim\F}[s] + \E_{s\sim\F}[(v-s)^+]$. Constraint \eqref{eq:lp-c2} is the central structural feature of the reduction. In a standard single-buyer pricing problem, IC and IR only impose a lower bound on utility through convexity and monotonicity. Here, however, the accurate regime imposes an upper bound on the utility available in the hallucinatory regime: otherwise, a buyer who knows the signal is accurate would prefer to mimic the hallucinatory case. Therefore, problem \eqref{eq:lp-obj2} captures the higher-order informational asymmetry by encoding the consistency requirement in the accurate regime as a utility-cap constraint in the hallucinatory regime.

Without loss of generality, any single-buyer, single-item incentive-compatible and individually rational mechanism can be implemented as a randomized posted-price auction: the seller posts a price $t \sim \G$ with $\G: [0, v_{\max}] \rightarrow [0, 1]$, and the buyer purchases the item and pays $t$ if and only if $v \geq t$. 
Under this mechanism, a buyer with valuation $v$ secures an expected utility equal to the integral over all prices below their valuation, $\int_0^v (v-t) \dd \G(t)$. Concurrently, the seller's expected revenue is calculated over the distribution of the buyer's valuation $v \sim \F$. A trade occurs whenever $v \geq t$, yielding an expected revenue of $t \cdot \P[v \geq t]$ for any given price $t$. Thus, integrating over the seller's pricing strategy $\G$, the original problem \eqref{eq:lp-obj2} can be rewritten as the following constrained maximization problem:
\begin{subequations}
\addtocounter{equation}{-1}
    \begin{align}
\Rev(\F,C) = \max_{\G: \dd \G \ge 0}: \quad &  \int_0^{v_{\max}} t\cdot (1 - \F(t)) \dd \G(t)  \label{eq:lp-ppa-primal}\\
\text{subject to}: \quad & \int_0^v (v-t) \dd \G(t) \le (1-C) \cdot \Ex{s\sim\F}{s} + \Ex{s\sim\F}{(v-s)^+} & \forall v \in [0, v_{\max}]  \label{eq:lp-ppa-primal-cons1} \\
\quad & \int_0^{v_{\max}}  \dd \G(t) \leq 1 \label{eq:lp-ppa-primal-cons2}
\end{align}
\end{subequations}
where $s$ is a random variable drawn from $\F$. 

Problem \eqref{eq:lp-obj2} and \eqref{eq:lp-ppa-primal} characterize the optimal robust mechanism for any given consistency level $C$. However, because the resulting objective value depends on the distribution $\F$, this characterization does not yield an explicit formula for the optimal robustness guarantee. In the next subsection, we derive the exact robustness guarantee $R(C)$ achievable for each consistency $C$.
 \newcommand{\upperV}{U}

\subsection{Worst-case Robustness}\label{subsec::main-proof-2}
In this section, we characterize the optimal worst-case robustness guarantee achievable under a fixed consistency requirement $C$.
Recall that, for a fixed $C$, the robustness ratio is the worst-case revenue approximation over all priors:
\begin{align}
    R(C) = \min_{\F} \frac{\Rev(\F,C)}{\opt(\F)}. \label{eq:robust-def}
\end{align} 
In this section, we assume $C \in (0, 1)$. The endpoint cases are straightforward: when $C=0$, the consistency requirement is vacuous, so the seller can simply ignore the signal and attain $\opt(\F)$; when $C=1$, the corresponding guarantee is given by Theorem~\ref{thm:perfect}, proved in Section~\ref{sec:perfect}. We therefore focus on the interior regime.

Our proof has three steps. First, we write $\Rev(\F,C)$ in \eqref{eq:lp-ppa-primal}, as its dual minimization problem. Second, fixing the dual variables, we identify the worst-case prior $\F$. Third, we reduce the dual search space and solve the remaining low-dimensional optimization.

To analyze this worst-case ratio, we first apply duality to transform the primal maximization problem $\Rev(\F, C)$ in \eqref{eq:lp-ppa-primal} into a minimization problem.
\begin{align}
\begin{aligned}
\Rev(\F,C) = \min_{\eta \geq 0, \beta}: \quad & (1-C) \cdot \Ex{s\sim\F}{s} \cdot \beta(0) + \int_0^{v_{\max}} \beta(v) \cdot \F(v) \dd v + \eta \\
\text{subject to}: \quad & \eta + \int_t^{v_{\max}} \beta(v) \dd v \ge t \cdot (1 - \F(t)) \quad & \forall t \in [0, {v_{\max}}] \\
& \beta \text{ is decreasing and } \beta(v_{\max}) = 0.
\end{aligned}
\label{eq:lp-dual}
\end{align}

\begin{lemma}
    Problem \eqref{eq:lp-dual} is the dual program of the primal problem \eqref{eq:lp-ppa-primal}.
\end{lemma}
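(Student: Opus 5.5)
We derive \eqref{eq:lp-dual} by standard Lagrangian duality for the infinite-dimensional LP \eqref{eq:lp-ppa-primal}. Introduce a nonnegative measure $\mu$ on $[0,v_{\max}]$ as the multiplier for the utility-cap constraints \eqref{eq:lp-ppa-primal-cons1} and a scalar $\eta\ge 0$ for the total-mass constraint \eqref{eq:lp-ppa-primal-cons2}. The Lagrangian is
\[
\int t(1-\F(t))\dd\G(t) + \int \Bigl[(1-C)\Ex{s\sim\F}{s} + \Ex{s\sim\F}{(v-s)^+} - \int_0^v (v-t)\dd\G(t)\Bigr]\dd\mu(v) + \eta\Bigl(1-\int\dd\G(t)\Bigr).
\]
We swap the order of integration (Fubini/Tonelli, all terms bounded on $[0,v_{\max}]$) in the $\G$-term:
\[
\int_0^{v_{\max}}\!\int_0^v (v-t)\dd\G(t)\dd\mu(v) = \int \Bigl(\int_t^{v_{\max}} (v-t)\dd\mu(v)\Bigr)\dd\G(t).
\]
Then we reparametrize by $\beta(w)\triangleq \mu((w,v_{\max}])$. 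This is nonnegative and nonincreasing with $\beta(v_{\max})=0$, and every such function arises from some $\mu\ge0$. With it, $\int_t^{v_{\max}}(v-t)\dd\mu(v)=\int_t^{v_{\max}}\beta(w)\dd w$.

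Maximizing the Lagrangian over $\dd\G\ge 0$ is bounded (equal to zero) iff $t(1-\F(t)) - \eta - \int_t^{v_{\max}}\beta(w)\dd w\le 0$ for all $t$. This is exactly the dual constraint. The remaining constant terms form the dual objective. We rewrite them the same way:
\[
\int \Ex{s}{(v-s)^+}\dd\mu(v) = \int\!\int_0^v \F(w)\dd w\dd\mu(v) = \int_0^{v_{\max}} \beta(w)\F(w)\dd w,
\]
using $\Ex{s}{(v-s)^+}=\int_0^v\F(w)\dd w$. Similarly, $\int (1-C)\Ex{}{s}\dd\mu = (1-C)\Ex{}{s}\,\mu([0,v_{\max}])$. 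Identifying $\mu([0,v_{\max}])$ with $\beta(0)$ (or $\beta(0^-)$, absorbing a possible atom of $\mu$ at $0$ by the convention on $\beta$) gives the objective of \eqref{eq:lp-dual}. Weak duality (primal $\le$ dual) then follows immediately from this computation: for any feasible $\G$ and dual-feasible $(\eta,\beta)$, the Lagrangian sandwiches the two objectives.

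The main obstacle is strong duality, i.e., that there is no gap so that $\Rev(\F,C)$ equals the dual minimum, and that the minimum is attained. I would first handle finite-support $\F$ (or a finite grid of $v,t$), where both programs are finite LPs and strong duality is classical. The primal is feasible ($\G=0$) and bounded (revenue $\le v_{\max}$). Afterwards I would pass to general $\F$ by discretization: approximate $\F$ and the constraint index sets by finite grids, and show that both primal and dual values converge. The primal converges by weak-$*$ compactness of subprobability measures $\G$ on $[0,v_{\max}]$, using upper semicontinuity of the objective (up to modifying $\F$'s left-continuity at atoms). The dual converges by Helly's selection theorem for the uniformly bounded monotone functions $\beta$; they are bounded since $\beta(0)$ can be taken $\le$ the primal value divided by $(1-C)\Ex{}{s}$ for $C<1$. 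Alternatively, one can invoke a conic strong-duality theorem (e.g., Anderson–Nash) with the Slater-type point $\G=0$, which satisfies \eqref{eq:lp-ppa-primal-cons1} strictly since the right-hand side is at least $(1-C)\Ex{}{s}>0$ for $C\in(0,1)$.
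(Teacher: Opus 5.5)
Your derivation is correct and matches the paper's proof: the paper also attaches a nonnegative measure $\lambda$ to the utility-cap constraints and a scalar $\eta$ to the mass constraint, swaps the order of integration, and sets $\beta(v)=\lambda(v_{\max})-\lambda(v)$, which is your tail-mass function $\beta(w)=\mu((w,v_{\max}])$. Your discussion of strong duality goes beyond the paper, for instance via the Slater point $\G=0$, whose slack is at least $(1-C)\,\E_{s\sim\F}[s]>0$ for $C\in(0,1)$; the paper only carries out the formal dualization and then uses the equality $\Rev(\F,C)=\min(\cdots)$ without further justification.
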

\begin{proof}
Let $\lambda$ be the nonnegative measure dual to constraint \eqref{eq:lp-ppa-primal-cons1}, and let $\eta$ be the dual variable corresponding to the normalization \eqref{eq:lp-ppa-primal-cons2} of the price distribution. The dual of \eqref{eq:lp-ppa-primal} is
\begin{align*}
\min_{\eta \geq 0, \lambda}: \quad & (1-C) \cdot \Ex{s \sim \F}{s} \cdot \int_0^{v_{\max}} \dd \lambda(v) + \int_0^{v_{\max}} \Ex{s \sim \F}{(v-s)^+} \dd \lambda(v) + \eta \\
\text{subject to}: \quad & \eta + \int_t^{v_{\max}} (v-t) \dd \lambda(v) \ge t \cdot (1 - \F(t)) \quad & \forall t \in [0, {v_{\max}}]\\
& \dd \lambda \ge 0.
\end{align*}
Notice that
\begin{align*}
   \quad \int_0^{v_{\max}}  \Ex{s \sim \F}{(v-s)^+} \dd \lambda(v) 
 = \int_0^{v_{\max}}  \F(v) \left(\lambda({v_{\max}})-\lambda(v) \right) \dd v 
\end{align*}
and 
$$\int_t^{v_{\max}} (v-t) \dd \lambda(v) = \int_t^{v_{\max}} \left(\lambda({v_{\max}})-\lambda(v)\right) \dd v. $$
Define $\beta(v) = \lambda({v_{\max}}) - \lambda(v)$. Then $\beta$ is decreasing, $\beta(v_{\max})=0$, and the dual becomes exactly \eqref{eq:lp-dual}.
\end{proof}

The remainder of this section is to fully characterize $R(C)$. The representation in  \eqref{eq:robust-def} shows that this reduces to jointly minimizing the following quantity over the distribution $\F$ and the dual variables $(\eta,\beta(\cdot))$:
\begin{align*}
\begin{aligned}
R(C) = \min_{\F} \min_{\eta \geq 0, \beta}: \quad & \left((1-C) \cdot \Ex{s \sim \F}{s} \cdot \beta(0) + \int_0^{v_{\max}} \beta(v) \cdot \F(v) \dd v + \eta \right) / \opt(\F)~ \\
\text{subject to}: \quad & \eta + \int_t^{v_{\max}} \beta(v) \dd v \ge t \cdot (1 - \F(t)) \quad \forall t \in [0, {v_{\max}}] \\
& \beta \text{ is decreasing and } \beta(v_{\max}) = 0.
\end{aligned}
\end{align*}

\paragraph{Characterizing the Worst-Case Distribution.}
Without loss of generality, we may assume that \(\beta(\cdot)\) is right-continuous. For any feasible decreasing \(\beta\), define
$
\tilde{\beta}(v)\triangleq \lim_{w\downarrow v}\beta(w)$.
Then \(\tilde{\beta}\) is decreasing and satisfies \(\tilde{\beta}(v)\le \beta(v)\) for all \(v\). Since \(\beta\) enters only through integrals in the constraint, replacing \(\beta\) by \(\tilde{\beta}\) preserves feasibility and weakly decreases the objective.  We first fix the pair $(\eta,\beta(\cdot))$ and ask which distribution $\F$ minimizes the objective.  Hence, we fix a targeted optimal revenue benchmark $T=\opt(\F)$, a scalar $\eta \geq 0$, and a decreasing and right continuous function $\beta(\cdot)$ such that $\beta(\upperV) = 0$ for some $U\in \R$.\footnote{Note that $\upperV$ is a general upper bound of $v$ and need not equal the absolute maximum valuation $v_{\max}$.}
We consider the following minimization problem over the distribution $\F$:
\begin{align}
\begin{aligned}
\min_{\F : \opt(\F) = T}: \quad & (1-C) \beta(0) \Ex{s\sim \F}{s} + \int_0^{\upperV} \beta(v) \F(v) \dd v + \eta  
\\
\text{subject to}: \quad & \eta + \int_t^{\upperV} \beta(v) \dd v \ge t(1 - \F(t)) \quad \forall  t \in [0, \upperV].
\end{aligned}
\label{eq:dual:U}
\end{align}

To formalize the worst-case distribution, we first define two threshold values, $v_L$ and $v_H$. These thresholds dictate where the binding constraints of the problem shift: 
\begin{align*}
v_L &= \max \left\{ v \ \Big| \ \eta + \int_v^{\upperV} \beta(w) \dd w \ge T \right\}~, \\
v_H &= \inf \left\{ v \ \mid \ \beta(v) < (1-C) \beta(0) \right\}~.
\end{align*}
The two thresholds have a simple interpretation. The quantity $B(v) \triangleq \eta + \int_v^{\upperV} \beta(w) \dd w$ is the upper bound on the revenue curve $v(1-\F(v))$ imposed by the constraint in \eqref{eq:dual:U}. 
Because $\beta(v)$ is decreasing, $B(v)$ is decreasing and convex. Roughly speaking, the dual problem \eqref{eq:dual:U} can be interpreted as searching for the cheapest convex function that dominates the revenue curve. This geometric viewpoint leads to the following argument: once such a function $B(v)$ is fixed, the worst-case distribution is obtained by making its revenue curve touch that function $B(v)$ wherever doing so lowers the objective value. 
Here $v_L$ is the last point at which this upper bound still lies 
above the benchmark $T$, so this constraint is not binding below $v_L$. The threshold $v_H$ is the first point at which $\beta(v)$ drops below $(1-C)\beta(0)$. After the objective is rewritten in the proof of \Cref{lem::feasible}, this threshold will mark the point where the coefficient of $\F(v)$ in the objective function changes sign. 

Based on this intuition, Lemma~\ref{lem:worst_case_dist} characterizes the worst-case distribution for given parameters $(\eta, \beta)$. Before that, Lemma~\ref{lem::feasible} shows that these thresholds $(v_L, v_H)$ are well defined and ordered as illustrated in Figure~\ref{fig:thresholds}, whenever \eqref{eq:dual:U} attains a value below the benchmark $T$.

\begin{figure}[htbp]
    \centering
    \includegraphics[width=0.5\textwidth]{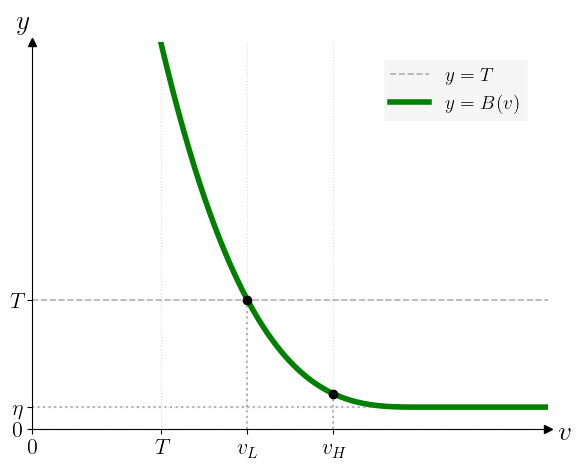}
    \caption{Illustration of the thresholds $v_L$ and $v_H$.}
    \label{fig:thresholds}
\end{figure}

\begin{lemma} \label{lem::feasible}
    Suppose \eqref{eq:dual:U} is feasible and its objective value is strictly less than $T$. Then $\beta(0) > 0$, the thresholds $v_L$ and $v_H$ are well defined, and they satisfy $T \leq v_L \leq v_H$.
\end{lemma}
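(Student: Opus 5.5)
The plan is to extract everything from two facts. First, the benchmark $T=\opt(\F)$ forces the convex upper bound $B(v)=\eta+\int_v^{\upperV}\beta(w)\dd w$ to reach $T$ somewhere. Second, the objective of \eqref{eq:dual:U} can be bounded below pointwise by a quantity that does not depend on $\F$.

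\textbf{Step 1: the sets defining the thresholds are nonempty.} Let $p$ be a price with $p(1-\F(p))=T$. If the optimum is only a supremum, I work with prices attaining revenue at least $T-\delta$ and let $\delta\to 0$. Since $1-\F(p)\le 1$, we have $p\ge T$. The constraint of \eqref{eq:dual:U} at $t=p$ gives $B(p)\ge T$, so the set $\{v: B(v)\ge T\}$ contains $p$.

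\textbf{Step 2: the value of $\eta$, positivity of $\beta(0)$, and $v_L\ge T$.} Since $\F(v)\in[0,1]$ and $\beta\ge 0$, the objective is at least $\eta$. The hypothesis that the objective is less than $T$ therefore gives $\eta<T$, i.e.\ $B(\upperV)<T$.

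\begin{itemize}
\item \emph{$v_L$ is well defined.} $B$ is continuous and nonincreasing. Hence $\{v: B(v)\ge T\}$ is a closed interval $[0,v_L]$ with $v_L<\upperV$ and $B(v_L)=T$.
\item \emph{$v_L\ge T$.} This follows from $v_L\ge p\ge T$.
\item \emph{$\beta(0)>0$.} If $\beta(0)=0$, monotonicity gives $\beta\equiv 0$. Then $B\equiv\eta<T$, which contradicts $B(p)\ge T$.
\item \emph{$v_H$ is well defined.} We have $\beta(\upperV)=0<(1-C)\beta(0)$ because $C<1$. So the set $\{v:\beta(v)<(1-C)\beta(0)\}$ is nonempty and its infimum $v_H$ exists.
\end{itemize}

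\textbf{Step 3: the ordering $v_L\le v_H$.} This is the main step. Rewrite $\E_{s\sim\F}[s]=\int_0^{\upperV}(1-\F(v))\dd v$. The objective becomes
\[
\eta+\int_0^{\upperV}\Bigl[(1-C)\beta(0)\,(1-\F(v))+\beta(v)\,\F(v)\Bigr]\dd v .
\]
If $\E[s]=\infty$ the objective is infinite and there is nothing to prove.

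The integrand is a convex combination, with weights $1-\F(v)$ and $\F(v)$, of $(1-C)\beta(0)$ and $\beta(v)$. It is therefore at least $\min\{(1-C)\beta(0),\beta(v)\}$. This minimum equals $\beta(v)$ for $v>v_H$, and it is nonnegative everywhere. Hence the objective is at least
\[
\eta+\int_{v_H}^{\upperV}\beta(v)\dd v = B(v_H).
\]

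Now suppose, for contradiction, that $v_H<v_L$. Since $B$ is nonincreasing, $B(v_H)\ge B(v_L)=T$. The objective would then be at least $T$, contradicting the hypothesis. Thus $T\le v_L\le v_H$.

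\textbf{Main obstacle.} I expect the only delicate points to be technical, since this $\F$-free lower bound makes the ordering immediate.
\begin{itemize}
\item \emph{Attainment of the supremum.} One must handle the case where $\opt(\F)$ is not attained, using the $\delta$-approximation in Step 1.
\item \emph{Well-definedness of the maximum in $v_L$.} This follows from the continuity of $B$.
\end{itemize}
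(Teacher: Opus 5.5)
Your proof is correct and follows essentially the same route as the paper. It uses the same degenerate-case argument for $\beta(0)>0$ and the same $\F$-free lower bound $B(v_H)$ on the objective; your pointwise bound by $\min\{(1-C)\beta(0),\beta(v)\}$ is just a repackaging of the paper's sign split of $\beta(v)-(1-C)\beta(0)$ around $v_H$. It also uses the same fact $B(p)\ge T$ at a revenue-maximizing price $p\ge T$, combined with monotonicity and continuity of $B$. Your explicit use of $C<1$ for the existence of $v_H$, and your $\delta$-approximation for a possibly unattained optimum, are minor refinements that the paper leaves implicit.
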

\begin{proof}
Let $B(v) = \eta + \int_v^{\upperV} \beta(w) \dd w$. Because  $\beta(\cdot)$ is decreasing and $\beta(\upperV) = 0$, $B(v)$ is a continuous, non-increasing, and convex function of $v$. 

First, we establish that $v_H$ is well-defined and $\beta(0) > 0$. If $\beta(0) = 0$, then monotonicity of $\beta$ implies $\beta(v) = 0$ for all $v$. In that degenerate case the constraint in \eqref{eq:dual:U} collapses to $B(v)=\eta \ge t(1-\F(t))$ for every $t$, so feasibility requires $\eta \ge \opt(\F)=T$. The objective would then be at least $T$, contradicting the standing assumption that the objective is strictly below $T$. Hence $\beta(0)>0$. Since $\beta$ decreases from the positive value $\beta(0)$ to $0$ at $\upperV$, there must be a first point at which it drops below $(1-C)\beta(0)$; this is exactly $v_H$.

Next, we show $B(v_H)$ is a lower bound on the objective. For any valid distribution $\F$, the objective value is:
$$\begin{aligned}
    &(1-C)\beta(0)\Ex{s\sim \F}{s} + \int_0^{\upperV} \beta(v)\F(v) \dd v + \eta \\
    & \quad = (1-C)\beta(0) \int_0^{\upperV} (1-\F(v)) \dd v + \int_0^{\upperV} \beta(v)\F(v) \dd v + \eta \\
    & \quad = B(0) - \int_0^{\upperV} \left[ \beta(v) - (1-C)\beta(0) \right] (1-\F(v)) \dd v \\
    & \quad \geq B(0) - \int_0^{v_H} \left[ \beta(v) - (1-C)\beta(0) \right] \dd v \\
    & \quad \geq B(v_H).
\end{aligned}$$
The first inequality uses only that $\F(v) \in [0,1]$ and that $\beta(v) - (1-C)\beta(0) \ge 0$ on $[0,v_H)$ and $\beta(v) - (1-C)\beta(0) \le 0$ on $[v_H, U]$. The second inequality discards a nonnegative term because $\beta(0) \ge 0$. Since the objective is assumed to be strictly below $T$, we conclude that this lower bound must satisfy $B(v_H) < T$.

Finally, we show $v_L$ is well-defined and lies before $v_H$. Because $\opt(\F)=T$, the revenue curve $t(1-\F(t))$ reaches the value $T$ somewhere at $t\ge T$. The constraint in \eqref{eq:dual:U}, namely $B(t) \ge t(1-\F(t))$, therefore implies that $B$ must also stay weakly above $T$ at least until the revenue benchmark is met. On the other hand, we have just shown that $B(v_H)<T$. Since $B$ is continuous and non-increasing, it must cross the level $T$ at some point in $[T,v_H)$. By definition, the last point where $B(v)\ge T$ is precisely $v_L$, and therefore $T \le v_L <  v_H$.
\end{proof}

With the threshold structure in place, we can now determine the minimizing distribution explicitly. Figure~\ref{fig:worst-case-dist-constraints} illustrates the revenue curve constraints that govern the worst-case distribution. 

\begin{figure}[htbp]
    \centering
    \includegraphics[width=0.5\textwidth]{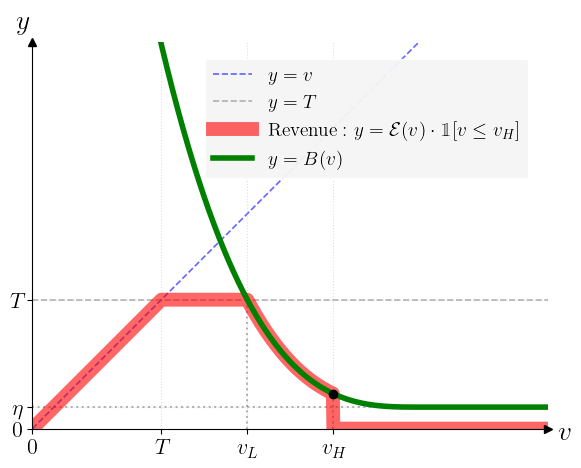}
    \caption{Revenue-curve constraints that determine the worst-case distribution. }
    \label{fig:worst-case-dist-constraints}
\end{figure}
\begin{lemma}[The Worst-Case Distribution]
\label{lem:worst_case_dist}
Fix $T$, $\eta < T$, and a decreasing function $\beta(\cdot)$ with $\beta(\upperV)=0$. If the objective in \eqref{eq:dual:U} is strictly less than $T$, then a minimizing cumulative distribution function is given by
$$
\F(v) = \begin{cases} 
    0 & v \in [0, T) \\ 
    1 - \frac{T}{v} & v \in [T, v_L) \\ 
    1 - \frac{\eta + \int_v^{\upperV} \beta(w) \dd w}{v} & v \in [v_L, v_H) \\ 
    1 & v \in [v_H, \infty) 
\end{cases}.
$$
\end{lemma}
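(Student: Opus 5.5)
The approach is a pointwise domination argument applied to the rewritten objective from the proof of Lemma~\ref{lem::feasible}. Write $B(v)=\eta+\int_v^{\upperV}\beta(w)\dd w$ and $g(v)=\beta(v)-(1-C)\beta(0)$. For every feasible $\F$ that proof gives
\[
(1-C)\beta(0)\Ex{s\sim\F}{s}+\int_0^{\upperV}\beta(v)\F(v)\dd v+\eta \;=\; B(0)-\int_0^{\upperV} g(v)\,(1-\F(v))\dd v .
\]
By the definition of $v_H$ and the monotonicity of $\beta$, we have $g\ge 0$ on $[0,v_H)$ and $g\le 0$ on $[v_H,\upperV]$. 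Minimizing the objective is therefore the same as making $1-\F(v)=\P[V\ge v]$ as large as possible on $[0,v_H)$ and as small as possible, namely $0$, on $[v_H,\upperV]$. Lemma~\ref{lem::feasible} already guarantees $\beta(0)>0$ and $T\le v_L\le v_H$, so the thresholds are well defined.

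The first step is a pointwise upper bound on $1-\F(v)$ valid for every feasible $\F$ with $\opt(\F)=T$. Three facts are available. Trivially $1-\F(v)\le 1$. From $\opt(\F)=T$ we get $v(1-\F(v))\le T$, i.e.\ $1-\F(v)\le T/v$. The dual constraint in \eqref{eq:dual:U} gives $1-\F(v)\le B(v)/v$. Set $u(v)=\min\{1,T/v,B(v)/v\}$ and determine which term is active:
\begin{itemize}
\item On $[0,T)$ the minimum is $1$, since $B(v)\ge B(v_L)\ge T>v$.
\item On $[T,v_L)$ the minimum is $T/v$, because $B(v)\ge T$ there.
\item On $[v_L,v_H)$ the minimum is $B(v)/v$, because $B(v)<T\le v$ past $v_L$ by the definition of $v_L$.
\end{itemize}
Multiplying $1-\F\le u$ by $g\ge 0$ on $[0,v_H)$ and using $1-\F\ge 0$ with $g\le 0$ on $[v_H,\upperV]$ gives
\[
\int_0^{\upperV} g(1-\F)\le \int_0^{v_H} g\,u .
\]
Hence the objective of every feasible $\F$ is at least $B(0)-\int_0^{v_H}g\,u$.

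The second step is to check that the candidate $\F$ in the statement, whose tail is $1-\F=u\cdot\ind[v<v_H]$, is itself feasible; it then attains this lower bound and is a minimizer. We need four things.
\begin{itemize}
\item $1-\F$ is non-increasing with values in $[0,1]$. This holds because $1$, $T/v$ and $B(v)/v$ are each non-increasing (as $B$ is non-increasing and positive). At $v_L$ the pieces $T/v$ and $B/v$ agree by continuity of $B$, and the drop to $0$ at $v_H$ only places an atom at the top of the support. So the formula defines a legitimate distribution under the convention $\F(v)=\P[V<v]$.
\item The dual constraint $t(1-\F(t))\le B(t)$ holds everywhere. On $[0,T)$ the revenue is $t<T\le B(t)$. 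On $[T,v_L)$ it is $T\le B(t)$. On $[v_L,v_H)$ it equals $B(t)$. Beyond $v_H$ it is $0$.
\item $\opt(\F)=T$. The revenue curve equals $T$ on $[T,v_L)$ and is at most $B(t)<T$ afterwards, so the supremum is exactly $T$. If $v_L=T$ we instead check directly that the revenue at $T$ equals $B(T)=T$.
\item The hypothesis that the value is below $T$ is needed only to invoke Lemma~\ref{lem::feasible}.
\end{itemize}

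I expect the main obstacle to be boundary and regularity bookkeeping rather than the core idea. Three points need care. One is the atom at $v_H$ and whether the bound $1-\F(v)\le T/v$ from $\opt(\F)=T$ is stated consistently with the left-limit convention $\F(v)=\P[V<v]$, which is exactly why the tail $\P[V\ge v]$ appears in the revenue. Another is the degenerate case $v_L=v_H$, where the middle piece is empty. The last is confirming that the pointwise maximizer $u$ happens to be monotone, so that no monotonicity constraint on $\F$ binds. I would handle the monotonicity check first, since if it failed the pointwise argument would have to be replaced by a genuine rearrangement argument.
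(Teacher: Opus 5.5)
Your proposal is correct and takes essentially the paper's route: the coefficient $\beta(v)-(1-C)\beta(0)$ changes sign at $v_H$, so $\F$ is pushed to $1$ on $[v_H,U]$, and below $v_H$ the tail $1-\F(v)$ saturates the envelope $\min\{v,T,B(v)\}/v$ pointwise. Your lower-bound-plus-attainment framing, with the explicit checks that this tail is non-increasing, satisfies the dual constraint, and yields $\opt(\F)=T$, spells out what the paper asserts in one line; also, since $B(v_H)<T\le B(v_L)$, the degenerate case $v_L=v_H$ you worry about cannot occur.
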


\begin{proof}
Using the identity $\Ex{s\sim \F}{s} = \int_0^{\upperV} (1 - \F(v)) \dd v$, we rewrite the objective in terms of $\F(v)$:
$$
(1-C)\beta(0) \int_0^{\upperV} (1 - \F(v)) \dd v + \int_0^{\upperV} \beta(v) \F(v) \dd v + \eta.
$$
Collecting the terms involving $\F(v)$, the coefficient of $\F(v)$ in the integral is
$$
\beta(v) - (1-C)\beta(0).
$$
Since $\beta(\cdot)$ is decreasing and satisfies \(\beta(\upperV)=0\), this coefficient is also decreasing in $v$. By definition of $v_H$, it is nonnegative for $v < v_H$ and nonpositive for $v \ge v_H$. Hence, on $[0,v_H)$ the objective is weakly decreased by choosing $\F(v)$ as small as feasibility allows, while on $[v_H,\upperV]$ it is weakly decreased by choosing $\F(v)$ as large as feasibility allows. 

For every $v > v_H$, feasibility implies that the revenue $v\of{1-\F(v)}\le \eta + \int_v^{\upperV} \beta(w) \dd w \le B(v_H) <T $. Thus, the interval \((v_H,\upperV]\) is not needed to attain the revenue target \(T\), so setting $\F(v)=1$ for $v \ge v_H$ preserves both feasibility of \eqref{eq:dual:U} and the constraint \(\opt(\F)=T\).

For $v < v_H$, feasibility is governed by three upper bounds on the revenue curve $v(1-\F(v))$:
\begin{itemize}
    \item the trivial probability bound $v(1-\F(v)) \le v$,
    \item the normalization $\opt(\F)=T$, which implies $v(1-\F(v)) \le T$,
    \item the constraint in \eqref{eq:dual:U}, namely $v(1-\F(v)) \le \eta + \int_v^{\upperV} \beta(w) \dd w$.
\end{itemize}
Accordingly, define the envelope
$$
\mathcal{E}(v) = \min\left\{v,\, T,\, \eta + \int_v^{\upperV} \beta(w) \dd w\right\}.
$$
This envelope is the largest revenue curve compatible with all constraints. Because the objective favors smaller values of $\F(v)$ on $[0,v_H)$, the minimizing distribution saturates this envelope pointwise there.
Thus, for $v < v_H$,
$1-\F(v)=\frac{\mathcal{E}(v)}{v}$.
Evaluating the minimum on each interval gives the stated formula:
\begin{itemize}
    \item For $v \in [0,T)$, the minimum is $v$, so $1-\F(v)=1$ and hence $\F(v)=0$.
    \item For $v \in [T,v_L)$, the minimum is $T$, so $1-\F(v)=\frac{T}{v}$.
    \item For $v \in [v_L,v_H)$, the minimum is $\eta + \int_v^{\upperV} \beta(w) \dd w$, so $1-\F(v)=\frac{\eta + \int_v^{\upperV} \beta(w) \dd w}{v}$.
\end{itemize}
This proves the piecewise characterization. 
\end{proof}

\paragraph{Characterizing the Worst-Case Pair $(\eta, \beta)$.}
With the worst-case distribution established, we next simplify the pair $(\eta,\beta)$. The reduction proceeds in two steps. We first show that $\eta$ can be normalized to zero without increasing the objective. We then show that, after this normalization, it is sufficient to consider step functions for $\beta(\cdot)$.

The next lemma justifies the first reduction step. It shows that any feasible pair can be converted into another feasible pair with $\eta=0$ and no larger objective value.
\begin{lemma}[Elimination of the Constant $\eta$]
\label{lem:eliminate_mu}
Fix a revenue target $T$, a feasible pair $(\eta, \beta(\cdot))$, and an upper bound $\upperV$ in \eqref{eq:dual:U}$,$ and suppose the objective value is strictly less than $T$. Then there exists another feasible pair $(\tilde{\eta}, \tilde{\beta}(\cdot))$ with $\tilde{\eta}=0$ and a modified upper bound $\tilde{\upperV}$ whose objective value is no larger than the original one. Moreover, if
\[
\tilde{v}_H \triangleq \inf \left\{ v \mid \tilde{\beta}(v) < (1-C) \tilde{\beta}(0) \right\},
\]
then $\tilde{\beta}(\tilde{v}_H) = 0$.
\end{lemma}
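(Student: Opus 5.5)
The plan is to absorb the constant $\eta$ into a flat tail segment of $\beta$ placed just after $v_H$. This segment is chosen to contribute nothing to the rewritten objective, so the objective does not grow. I would work with the rewritten form of the objective used in the proof of Lemma~\ref{lem::feasible}:
\[
\Phi(\F;\eta,\beta) \;=\; B(0) - \int_0^{\upperV}\bigl[\beta(v)-(1-C)\beta(0)\bigr]\bigl(1-\F(v)\bigr)\dd v,
\qquad B(v)=\eta+\int_v^{\upperV}\beta(w)\dd w .
\]
Since the objective is below $T$, Lemma~\ref{lem::feasible} gives $\beta(0)>0$, $T\le v_L<v_H$, and $B(v_H)<T$. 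It also suffices to compare against the worst-case distribution $\F^*$ of Lemma~\ref{lem:worst_case_dist}. That distribution has $\opt(\F^*)=T$ and $\F^*(v)=1$ for all $v\ge v_H$. Its value $\Phi(\F^*;\eta,\beta)$ is at most the original objective value.

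\textbf{Construction.} Write $c\triangleq(1-C)\beta(0)>0$ and $\ell\triangleq B(v_H)/c$, which is finite since $B(v_H)=\eta+\int_{v_H}^{\upperV}\beta\ge 0$. Define
\[
\tilde\beta(v)=\begin{cases}\beta(v) & v<v_H,\\ c & v_H\le v< v_H+\ell,\\ 0 & v\ge v_H+\ell,\end{cases}
\qquad \tilde\eta=0,\qquad \tilde{\upperV}=v_H+\ell .
\]
The following checks are needed.
\begin{itemize}
\item $\tilde\beta$ is decreasing and right-continuous. This holds because $\beta(v)\ge c$ for $v<v_H$ by the definition of $v_H$.
\item $\tilde\beta(0)=\beta(0)$, so the threshold constant $(1-C)\tilde\beta(0)=c$ is unchanged.
\item $\tilde\beta\ge c$ on $[0,v_H+\ell)$, and $\tilde\beta=0$ afterwards. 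Hence $\tilde v_H=v_H+\ell$ and $\tilde\beta(\tilde v_H)=0$, which is the ``moreover'' claim.
\item $\tilde B(t)\triangleq\int_t^{\tilde{\upperV}}\tilde\beta=\int_t^{v_H}\beta+c\ell=B(t)$ for every $t\le v_H$.
\end{itemize}

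\textbf{Feasibility and objective.} For feasibility with the same $\F^*$, split into two ranges of $t$.
\begin{itemize}
\item For $t<v_H$ the constraint is unchanged, because $\tilde B(t)=B(t)$.
\item For $t\ge v_H$ we have $t(1-\F^*(t))=0\le\tilde B(t)$.
\end{itemize}
The constraint $\opt(\F^*)=T$ does not involve $(\eta,\beta)$.

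For the objective, $\tilde B(0)=B(0)$, and the integrand is identical on $[0,v_H)$. On $[v_H,\tilde{\upperV}]$ the integrand vanishes: either $\tilde\beta-c=0$ there, or $1-\F^*=0$. Hence $\Phi(\F^*;0,\tilde\beta)=\Phi(\F^*;\eta,\beta)$. Minimizing over $\F$ in \eqref{eq:dual:U} with the new pair can only lower this value further. The new value is therefore no larger than the original one.

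\textbf{Main obstacle.} I expect the main subtlety to be the bookkeeping on the original interval $[v_H,\upperV]$, where the old $\beta$ lay below $c$. We replace it wholesale, so we must confirm that no constraint there depended on the old $\beta$. This is exactly where $\F^*=1$ beyond $v_H$, taken from Lemma~\ref{lem:worst_case_dist}, is essential. A second point is that the lemma is applied for fixed $T$ while $\tilde{\upperV}$ may differ from $\upperV$. This is harmless because \eqref{eq:dual:U} allows an arbitrary upper bound $\upperV$.
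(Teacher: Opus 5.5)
Your proposal is correct and follows the paper's proof essentially step for step. It uses the same extension $\tilde\beta$, equal to $(1-C)\beta(0)$ on $[v_H, v_H + B(v_H)/((1-C)\beta(0)))$ and zero afterwards; the same feasibility check against the worst-case $\F^*$, split at $v_H$; and the same conclusion that the objective value is exactly preserved. The only difference is cosmetic: you verify the objective equality through the rewritten form $B(0)-\int_0^{U}[\beta(v)-(1-C)\beta(0)](1-\F^*(v))\dd v$ from the proof of Lemma~\ref{lem::feasible}, whereas the paper states it directly from the original expression.
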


\begin{proof}
Let $\F^*$ be the worst-case distribution given by Lemma~\ref{lem:worst_case_dist}, and let $v_H$ be the associated threshold. By construction, $\F^*(v)=1$ for all $v \ge v_H$.

Define
\[
\tilde{\upperV} = v_H + \frac{\eta + \int_{v_H}^{\upperV} \beta(w) \dd w}{(1-C)\beta(0)},
\qquad
\tilde{\eta}=0,
\]
and
\[
\tilde{\beta}(v)=
\begin{cases}
\beta(v) & v < v_H, \\
(1-C)\beta(0) & v \in [v_H,\tilde{\upperV}), \\
0 & v \ge \tilde{\upperV}.
\end{cases}
\]
Then $\tilde{\beta}$ is decreasing, $\tilde{\beta}(\tilde{\upperV})=0$, and the first point at which $\tilde{\beta}(v)$ falls below $(1-C)\tilde{\beta}(0)$ is $\tilde{v}_H=\tilde{\upperV}$. In particular, $\tilde{\beta}(\tilde{v}_H)=0$.

For any $t < v_H$,
\begin{align*}
\int_t^{\tilde{\upperV}} \tilde{\beta}(v) \dd v
&= \int_t^{v_H} \beta(v) \dd v + \eta + \int_{v_H}^{\upperV} \beta(v) \dd v \\
&= \eta + \int_t^{\upperV} \beta(v) \dd v \\
&\ge t \bigl(1-\F^*(t)\bigr).
\end{align*}
For $t \ge v_H$, the right-hand side is zero because $\F^*(t)=1$. Hence $(\tilde{\eta},\tilde{\beta})$ is feasible for the same worst-case distribution.

Finally,
\begin{align*}
& (1-C)\beta(0)\Ex{s\sim \F^*}{s} + \int_0^{\upperV} \beta(v) \F^*(v) \dd v + \eta  = (1-C)\tilde{\beta}(0)\Ex{s\sim \F^*}{s} + \int_0^{\tilde{\upperV}} \tilde{\beta}(v) \F^*(v) \dd v.
\end{align*}
Thus, the objective value is preserved and, in particular, does not increase.
\end{proof}
\begin{figure}[!t]
    \centering    \includegraphics[width=0.5\textwidth]{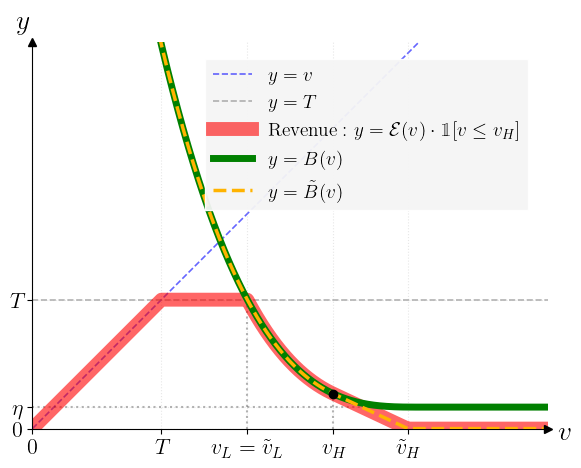}
    \caption{Illustration of the first reduction steps for $(\eta,\beta)$. For a given pair $(\eta,\beta)$, which induces $B(v) = \int_{v}^{\upperV} \beta(w) \dd w + \eta$, the figure shows how $\eta$ can be eliminated and $\beta$ can be transformed into $\tilde{\beta}$, which induces $\tilde{B}(v)$, without increasing the objective. The red curve is the revenue curve of the worst-case distribution under the pair $(0, \tilde{\beta})$.}
    \label{fig:worst-case-dist}
\end{figure}
In \Cref{fig:worst-case-dist}, we illustrate the original $B(v) = \eta + \int_v^{\upperV} \beta(w) \dd w$ in green solid line and the modified $\tilde{B}(v) = \int_v^{\tilde{\upperV}} \tilde{\beta}(w) \dd w$ in yellow dashed line.

The next lemma justifies the second reduction step. Once $\eta=0$, it shows that one may further restrict attention to step functions without increasing the objective value.
\begin{lemma}[Optimality of Step-Function $\beta$]
\label{lem:step_beta}
Suppose $\eta=0$, $\beta(v_H)=0$, and the objective in \eqref{eq:dual:U} is strictly less than $T$. 
Then there exists a step function \(\hat{\beta}(\cdot)\), equal to \(\beta(v_L)\) up to a threshold and zero thereafter, such that when \(\hat{\beta}\) is paired with its induced worst-case distribution \(\hat{\F}\), the resulting objective value is weakly smaller than that under \((\beta,\F)\).
\end{lemma}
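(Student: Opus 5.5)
The plan is to build $\hat\beta$ in two stages. Each stage is a modification of $\beta$ that keeps feasibility and does not increase the objective. After each stage I re-optimize $\F$, which by Lemma~\ref{lem:worst_case_dist} can only lower the value further. Throughout I write $B(v)=\int_v^{\upperV}\beta(w)\dd w$ (recall $\eta=0$). By continuity and the definition of $v_L$ we have $B(v_L)=T$. Set $c\triangleq\beta(v_L)$. Since $\beta(v_H)=0$ and $\beta\ge(1-C)\beta(0)>0$ on $[0,v_H)$, the support of $\beta$ is exactly $[0,v_H)$. Also $c>0$ because $v_L<v_H$.

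\emph{Stage 1 (flattening below $v_L$).} Replace $\beta$ on $[0,v_L)$ by the constant $c$, and keep $\F$ fixed as the worst-case distribution from Lemma~\ref{lem:worst_case_dist}. Feasibility for $t\ge v_L$ is untouched. For $t<v_L$ the new upper bound is $T+c(v_L-t)\ge T\ge t(1-\F(t))$, since $\opt(\F)=T$. The objective $(1-C)\beta(0)\E[s]+\int\beta\F$ is monotone in $\beta$ pointwise and in $\beta(0)$, and both only decreased. So the objective weakly drops. The threshold $v_H$ may move to the right, but the coefficient $\beta-(1-C)c$ stays nonnegative wherever the old one was. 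After this stage $\beta\equiv c$ on $[0,v_L]$ and $\beta\le c$ on $[v_L,v_H)$.

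\emph{Stage 2 (linearizing above $v_L$).} Set $\hat\upperV=v_L+T/c$ and $\hat\beta=c\,\ind[v<\hat\upperV]$. Then $\hat B(v)=c(\hat\upperV-v)$ is the tangent line of the convex function $B$ at $v_L$. Hence $\hat B\le B$, with equality on $[0,v_L]$. The coefficient $\hat\beta-(1-C)\hat\beta(0)=Cc$ is positive on $[0,\hat\upperV)$, so $\hat v_H=\hat\upperV$ and $\hat v_L=v_L$. The induced $\hat\F$ agrees with $\F$ below $v_L$. On $[v_L,\hat\upperV)$ it has revenue curve $\hat B(v)$, i.e.\ it is the ``triangular in value space'' shape. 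It satisfies $\opt(\hat\F)=T$ and is feasible by construction.

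To compare the two objectives, I use the representation from Lemma~\ref{lem::feasible},
\[
\text{Obj}=B(0)-\int_0^{v_H}\bigl[\beta(v)-(1-C)c\bigr]\,(1-\F(v))\dd v .
\]
Both pairs share $B(0)=T+cv_L$ and agree on $[0,v_L)$. The claim therefore reduces to
\[
\int_{v_L}^{\hat\upperV} Cc\,\frac{\hat B(v)}{v}\dd v\;\ge\;\int_{v_L}^{v_H}\bigl[\beta(v)-(1-C)c\bigr]\frac{B(v)}{v}\dd v .
\]
For the right-hand side I use $\beta=-B'$ and integrate by parts:
\[
\int_{v_L}^{v_H}\frac{\beta B}{v}\dd v=\frac{T^2}{2v_L}-\int_{v_L}^{v_H}\frac{B^2}{2v^2}\dd v .
\]
The same identity holds for $\hat B$ with $\hat\beta=c$.

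This comparison is the step I expect to be the main obstacle. Two effects pull in opposite directions: $\beta\le c$ (which lowers the coefficient) and $B\ge\hat B$ (which raises $1-\F$ and lengthens the support). So no pointwise comparison works. If a direct inequality between the two integrated forms does not close, I would try a continuous deformation instead. Let $\beta_u$ equal $c$ on $[v_L,u)$, and on $[u,\cdot)$ equal $\beta$ suitably shifted so that $B_u$ stays convex and tangent at $u$. I would then show that the objective is monotone in $u$ by differentiating in $u$. The sign of that derivative should follow from $\beta(u)\le c$ together with $\int_u(\cdot)\,dv/v^2$ weights that are decreasing in $v$.
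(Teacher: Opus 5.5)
Your construction and reduction are correct. The step function $\hat\beta=c\,\ind[v<v_L+T/c]$ with $c=\beta(v_L)$ is exactly the paper's. Its $\hat{B}$ is the tangent line to $B$ at $v_L$, so $B\ge\hat{B}\ge 0$. Stage~1 is valid, and the inequality you isolate is precisely what must be shown. The gap is that you stop at the only nontrivial step. You call the final comparison the main obstacle, assert that no pointwise comparison works, and fall back on a deformation $\beta_u$ that is never specified and whose derivative sign is only conjectured. As written, the lemma is not proved.

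The missing observation is that the integration by parts you already wrote closes the comparison at once. Apply it to both sides and subtract. The boundary terms $T^2/(2v_L)$ cancel; this uses $B(v_L)=\hat{B}(v_L)=T$ and $B(v_H)=\hat{B}(\hat U)=0$, which is where $\beta(v_H)=0$ enters. Then, with $\hat{B}(v)=c(\hat U-v)^+$ and $\hat U\le v_H$, your left side minus your right side equals
\[
\int_{v_L}^{v_H}\frac{B(v)^2-\hat{B}(v)^2}{2v^2}\dd v+(1-C)\,c\int_{v_L}^{v_H}\frac{B(v)-\hat{B}(v)}{v}\dd v\;\ge\;0,
\]
and both integrands are pointwise nonnegative because $B\ge\hat{B}\ge 0$. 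The opposing effects you identified are an artifact of your representation. The coefficient $\beta=-B'$ multiplies $1-\F=B/v$, so $\beta B/v$ is a total derivative plus the monotone term $B^2/(2v^2)$.

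The paper avoids the issue by writing the objective as $(1-C)\beta(0)\E_{\F}[s]+\int_0^U B\dd{\F}$, in which each piece is separately monotone. First, $\hat\beta(0)=c\le\beta(0)$. Second, $\hat{B}\le B$ gives $\hat{\F}\ge\F$, hence $\E_{\hat{\F}}[s]\le\E_{\F}[s]$. Third, on $[v_L,v_H]$ the substitution $q=1-\F(v)=B(v)/v$ turns $\int B\dd{\F}$ into $\int_0^{T/v_L}q\,v(q)\dd q$, which decreases because $\hat v(q)\le v(q)$. This is the same computation as yours, since $\int_{v_L}^{v_H}B\dd{\F}=\frac{T^2}{2v_L}+\int_{v_L}^{v_H}\frac{B^2}{2v^2}\dd v$. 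The paper's form also makes your Stage~1 unnecessary.
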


\begin{proof}
Since $\eta = 0$, let
\[
B(v) = \int_v^{\upperV} \beta(w) \dd w.
\]
Applying integration by parts, and using $\F(0)=0$ and $B(\upperV)=0$, we obtain
$$
\int_0^{\upperV} \beta(v) \F(v) \dd v = \int_0^{\upperV} B(v) \dd \F(v).
$$
Hence the objective becomes
\begin{align}
(1-C) \beta(0) \Ex{s\sim \F}{s} + \int_0^{\upperV} B(v) \dd \F(v). \label{obj:rewrite}
\end{align}
We now construct a step function $\hat{\beta}(\cdot)$:
$$
\hat{\beta}(v) = \begin{cases}
\beta(v_L) & \text{for } v < v_L + \frac{T}{\beta(v_L)}, \\
0 & \text{for } v \ge v_L + \frac{T}{\beta(v_L)}.
\end{cases}
$$
With this construction,
\[
\int_{v_L}^{\upperV} \hat{\beta}(w) \dd w = \int_{v_L}^{\upperV} \beta(w) \dd w = T,
\]
and $\hat{\beta}(v_L)=\beta(v_L)$. Let
\[
\hat{B}(v)=\int_v^{\upperV} \hat{\beta}(w)\dd w.
\]
Then $\hat{B}(v) \le B(v)$ for all $v$, while $\hat{B}(v_L)=B(v_L)=T$. Both $B$ and $\hat{B}$ are decreasing and convex.

Let $\hat{\F}$ be the worst-case distribution generated by this new $\hat{\beta}(\cdot)$. By construction, $\hat{v}_L=v_L$, where
\[
\hat{v}_L \triangleq \max \left\{ v \ \Big| \ \int_v^{\upperV} \hat{\beta}(w) \dd w \ge T \right\},
\]
and $\hat{v}_H \le v_H$, where
\[
\hat{v}_H \triangleq \inf \left\{ v \mid \hat{\beta}(v) < (1-C)\hat{\beta}(0) \right\} = v_L + \frac{T}{\beta(v_L)}.
\]

We claim that replacing $(\beta,\F)$ by $(\hat{\beta},\hat{\F})$ weakly decreases the objective in \eqref{obj:rewrite}. The comparison is transparent once the objective is split into its two terms.

First, because $\beta(\cdot)$ is decreasing,
\[
\hat{\beta}(0)=\beta(v_L) \le \beta(0),
\]
so the coefficient of $\E[s]$ weakly decreases.

Second, the induced distribution also shifts mass weakly downward. By Lemma~\ref{lem:worst_case_dist}, the two distributions coincide on $[0,v_L)$, while for $v \in [v_L,v_H)$ we have
\[
\F(v) = 1 - \frac{B(v)}{v},
\qquad
\hat{\F}(v) = 1 - \frac{\hat{B}(v)}{v}.
\]
Since $\hat{B}(v) \le B(v)$, it follows that $\hat{\F}(v) \ge \F(v)$ on that interval. Therefore $\E_{\hat{\F}}[s] \le \E_{\F}[s]$.

It remains to compare the integral term. On $[0,v_L)$, the measures $\dd \hat{\F}$ and $\dd \F$ coincide, and $\hat{B}(v) \le B(v)$, so
\[
\int_0^{v_L} \hat{B}(v) \dd \hat{\F}(v) \le \int_0^{v_L} B(v) \dd \F(v).
\]
On $[v_L,\upperV]$, write the integral in quantile form. Let $q=1-\F(v)$ and let $v(q)$ denote the corresponding inverse quantile. Then
\[
\int_{v_L}^{\upperV} B(v) \dd \F(v) = \int_0^{1-\F(v_L)} B(v(q)) \dd q.
\]
For $v \ge v_L$, Lemma~\ref{lem:worst_case_dist} gives $1-\F(v)=B(v)/v$, so in quantile form $B(v(q))=q\,v(q)$. Hence,
\[
\int_{v_L}^{\upperV} B(v) \dd \F(v) = \int_0^{1-\F(v_L)} q\,v(q) \dd q.
\]
Applying the same change of variables to $(\hat{B},\hat{\F})$ yields
\[
\int_{v_L}^{\upperV} \hat{B}(v) \dd \hat{\F}(v) = \int_0^{1-\hat{\F}(v_L)} q\,\hat{v}(q) \dd q.
\]
Because $\hat{v}_L=v_L$ and the two distributions agree at $v_L$, the integration bounds $1-\F(v_L) = 1-\hat{\F}(v_L)$. Moreover, $\hat{\F}(v) \ge \F(v)$ implies $\hat{v}(q) \le v(q)$ for every $q$. Therefore,
\[
\int_0^{1-\hat{\F}(v_L)} q\,\hat{v}(q) \dd q \le \int_0^{1-\F(v_L)} q\,v(q) \dd q.
\]
Combining the two intervals shows that the integral term weakly decreases as well.

Since both terms in \eqref{obj:rewrite} weakly decrease under the replacement, any optimal solution can be represented by a step function.
\end{proof}

\Cref{fig:step-beta} illustrates the reduction from a general decreasing function $\beta$ to a step function $\hat{\beta}$, together with the induced revenue curve under $\hat{\beta}$.

\begin{figure}[!t]
    \centering
    \includegraphics[width=0.5\textwidth]{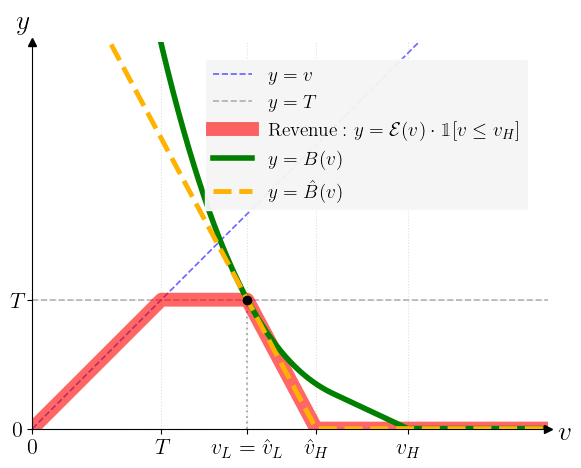}
    \caption{Step-function reduction for $\beta(\cdot)$.}
    \label{fig:step-beta}
\end{figure}

Lemma~\ref{lem:worst_case_dist} -- Lemma~\ref {lem:step_beta} reduce the original infinite-dimensional worst-case optimization over $(\eta,\beta(\cdot), \F)$ to a two-parameter family.
After these two reduction steps, the remaining optimization is low-dimensional and can be solved explicitly. We next evaluate this reduced problem and thereby complete the proof of Theorem~\ref{thm:frontier}.

\begin{proof}[Proof of Theorem~\ref{thm:frontier}]
Given a fixed $T$, we consider the ratio between the objective of the following optimization and $T$.
\begin{align*}
\begin{aligned}
\min_{\eta \geq 0, \beta, \F: \opt(\F) = T}: \quad & (1-C) \cdot \E[s] \cdot \beta(0) + \int_0^{v_{\max}} \beta(v) \cdot \F(v) \dd v + \eta \\
\text{subject to}: \quad & \eta + \int_t^{v_{\max}} \beta(v) \dd v \ge t \cdot (1 - \F(t)) \quad \forall t \in [0, {v_{\max}}] \\
& \beta \text{ is decreasing and } \beta(v_{\max}) = 0.
\end{aligned}
\end{align*}

We first show that this ratio is always at least $R^{\star}(C)$, thereby completing the proof of achievability.

By Lemma~\ref{lem:eliminate_mu} and Lemma~\ref{lem:step_beta}, it suffices to study the reduced class of pairs $(\eta,\beta(\cdot))$ with $\eta=0$ and step-function $\beta(\cdot)$. Moreover, Lemma~\ref{lem::feasible} implies $v_L \ge T$, which implies
\[
\int_T^{\infty} \beta(v) \dd v \ge T.
\]
Therefore, we may parameterize $\beta(\cdot)$ by the optimal revenue target $T$, $v_L \ge T$, and a constant $\beta > 0$:
$$
\beta(v) = \begin{cases}
\beta &\text{for } v < v_L + \frac{T}{\beta}, \\
0 &\text{otherwise}.
\end{cases}
$$
Let $\upperV = v_L + \frac{T}{\beta}$.
Under this, applying Lemma~\ref{lem:worst_case_dist}, the worst-case distribution takes the form
$$
\F(v) = \begin{cases}
0 &\text{for } v < T, \\
1 - \frac{T}{v} &\text{for } v \in [T, v_L), \\
1 - \frac{T - \beta(v - v_L)}{v} &\text{for } v \in \left[v_L, v_L + \frac{T}{\beta}\right), \\
1 &\text{for } v \geq v_L + \frac{T}{\beta}.
\end{cases}
$$

We then evaluate the objective:
\begin{align*}
 &(1-C) \beta(0) \E[s] + \int_0^{\upperV} \beta(v) \F(v) \dd v + \eta \\
 &\quad = (1 - C) \cdot \beta\left[T + \int_{T}^{v_L} \frac{T}{v} \dd v + \int_{v_L}^{v_L + \frac{T}{\beta}} \frac{T - \beta(v - v_L)}{v} \dd v \right] \\
 &\quad \quad \quad + \int_{T}^{v_L} \frac{T}{v^2} (T + \beta(v_L - v)) \dd v + \int_{v_L}^{v_L + \frac{T}{\beta}} \frac{T + \beta v_L}{v^2} (T - (v - v_L) \beta) \dd v \\
 &\quad = T + \beta v_L - C \beta \left[ T \ln\left(\frac{v_L}{T}\right) + (T + \beta v_L) \ln\left(1 + \frac{T}{\beta v_L}\right) \right].
\end{align*}
This expression is convex in $v_L$. Its unconstrained minimizer is
\[
v_L = \frac{T}{\beta \left( e^{\frac{1}{C \beta}} - 1 \right)}.
\]
Because this value is strictly smaller than $T$ while feasibility requires $v_L \ge T$, the constrained minimum is attained at $v_L=T$.

Substituting $v_L = T$ further simplifies the minimum objective value to
$$
T(1 + \beta) - T \cdot C (\beta + \beta^2) \ln\left(1 + \frac{1}{\beta}\right).
$$
Since $\opt(\F)=T$, dividing by $T$ yields the lower bound
\[
R(C) \ge \inf_{\beta>0} \left[(1+\beta) - C(\beta+\beta^2) \ln\left(1+\frac{1}{\beta}\right)\right].
\]

To prove optimality, fix any $\beta>0$ and any $T>0$, and define $v_H=T+T/\beta$. Choose $T$ small enough so that $v_H \le v_{\max}$. Consider the distribution
\[
\F_\beta(v)=
\begin{cases}
0 & v \in [0,T), \\
1-\dfrac{T-\beta(v-T)}{v} & v \in [T,v_H), \\
1 & v \ge v_H.
\end{cases}
\]
\begin{figure}[!t]
    \centering
    \includegraphics[width=0.5\textwidth]{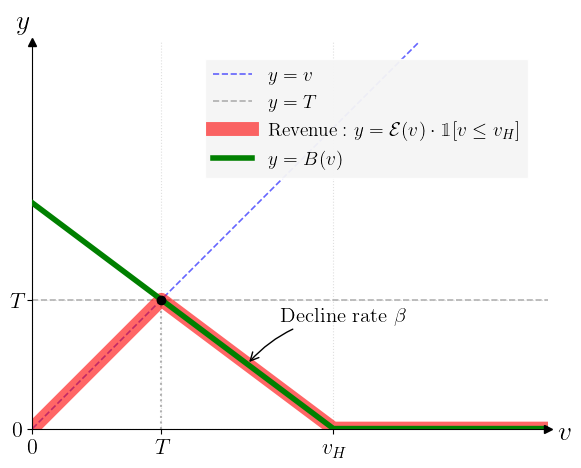}
    \caption{Revenue curve for the worst-case distribution $\F_\beta$.}
    \label{fig:worst-case-d}
\end{figure}
Its revenue curve (illustrated in Figure~\ref{fig:worst-case-d}) is
\[
v\bigl(1-\F_\beta(v)\bigr)=
\begin{cases}
v & v \in [0,T), \\
T-\beta(v-T) & v \in [T,v_H), \\
0 & v \ge v_H,
\end{cases}
\]
so $\opt(\F_\beta)=T$. Taking the feasible pair
\[
\eta=0,
\qquad
\beta(v)=
\begin{cases}
\beta & v < v_H, \\
0 & v \ge v_H,
\end{cases}
\]
the objective is exactly
\[
T \left[(1+\beta) - C(\beta+\beta^2) \ln\left(1+\frac{1}{\beta}\right)\right].
\]
By weak duality,
\[
\frac{\Rev(\F_\beta,C)}{\opt(\F_\beta)}
\le (1+\beta) - C(\beta+\beta^2) \ln\left(1+\frac{1}{\beta}\right).
\]
Taking the infimum over $\beta>0$ proves the reverse inequality. Hence, 
\[
R^{\star}(C) \;=\; \inf_{\beta>0}\Bigl[(1+\beta) - C\,(\beta+\beta^2)\,\ln\bigl(1+\tfrac{1}{\beta}\bigr)\Bigr].
\]
\end{proof} \section{Conclusion}
This paper studies pricing when the seller possesses side information of uncertain reliability. By adopting a consistency-robustness framework, we characterize the exact tradeoff between exploiting a signal as if it were accurate and protecting against the possibility that it is hallucinatory. We demonstrate that even when the seller cannot verify the signal's reliability, they can still design mechanisms that leverage the buyer’s awareness of the information reliability.

Our work suggests a design perspective that moves beyond the classical predict-then-optimize paradigm. Traditional mechanism design relies primarily on information directly held by the designer. By contrast, our framework shows that decisions can be built around higher-order knowledge, specifically, what the designer knows others know. This shifts the role of information from a passive constraint to an active strategic resource. Consequently, samples need not be used solely to estimate distributions; they can be integrated directly into the mechanism's rules to support robust and consistent decisions.

We conclude by highlighting several promising directions for future research.
\begin{itemize}
\item 
A natural extension is to relax the assumption that the signal is either perfectly accurate or purely hallucinatory, and instead allow arbitrary mixtures of the two cases. Our results suggest that the seller may still be able to exploit the buyer’s knowledge of the signal distribution. Such a generalization would help bridge our formulation with the Bayesian model of \cite{lobel2025}.

\item 
Another natural direction is to study the multi-bidder setting. Beyond leveraging each buyer’s knowledge of her own signal, one can ask whether a buyer may also have better information about her competitors than the seller does. Our results suggest that such higher-order informational asymmetries, together with the resulting strategic interactions, may themselves become a fundamental resource in mechanism design.
\end{itemize}

Beyond mechanism design, our work also contributes to the literature on algorithms with predictions by introducing a simple prediction paradigm in a stochastic/Bayesian optimization setting.
\begin{itemize}
\item
The algorithms-with-predictions literature was initially motivated by predictions as an alternative to stochastic modeling for going beyond worst-case analysis (see, e.g., Chapter 30 of~\cite{roughgarden2021beyond}), rather than as a complement to it. Recent work on prophet inequalities~\cite{arxiv/BaiHLL25,arxiv/KehneK25}, for example, proposes predictions of the prior as a way to interpolate between stochastic and adversarial settings: consistency corresponds to an accurate prior, recovering the stochastic model, while robustness corresponds to uninformative advice, recovering the adversarial model.

By contrast, our notion of consistency and robustness is built on top of an already stochastic environment, where the prediction takes the form of additional side information. In the prophet inequality setting, the analogue of our model would be to equip the algorithm designer with a signal about the largest realized value. Consistency would then correspond to the case in which this signal is correct, while robustness would correspond to the case in which it is an independent signal unrelated to the realized instance. This appears to be an interesting question in its own right, even separate from the higher-order informational aspects emphasized in our paper.
\end{itemize}

\bibliographystyle{alpha}
\bibliography{ref}

\end{document}